\documentclass[letterpaper, 10 pt, conference]{ieeeconf}  % Comment this line out
\IEEEoverridecommandlockouts                              % This command is only
\usepackage{amsmath,amsfonts}
\usepackage[ruled]{algorithm2e}
\usepackage{subfigure}
\usepackage{graphicx}
\usepackage{cite}
\usepackage{color}
\usepackage{multirow}
\usepackage{booktabs} 
\usepackage{tabularx}
\usepackage{float}
\allowdisplaybreaks

\usepackage{hyperref} %add by myself
\usepackage{amsmath,amssymb}

\usepackage{amsthm}

\newtheorem{lemma}{Lemma}
\newtheorem{proposition}{Proposition}
\newtheorem{corollary}{Corollary}

\newtheorem{remark}{Remark}
\usepackage{graphicx}
\usepackage{subcaption}
\usepackage{xurl}

\usepackage{xcolor}

\title{\LARGE \bf Minimizing Bid Cost Recovery for Energy Storage with Uniform Pricing
}

\author{Yaxuan Yu, Jingguan Liu, and Cong Chen%
\thanks{\textit{Corresponding authors:~Cong Chen;~Jingguan Liu.}}
\thanks{Y. Yu and C. Chen are with the Thayer School of Engineering, 
Dartmouth College, Hanover, NH 03755, USA.
{\tt\small \{yaxuan.yu.th, cong.chen\}@dartmouth.edu}}
\thanks{J. Liu is with the School of Electrical and Electronic Engineering, 
Huazhong University of Science and Technology, Wuhan, China.
{\tt\small spencerplusmail@foxmail.com}.}
}

\begin{document}

\maketitle
\thispagestyle{empty}
\pagestyle{empty}

%%%%%%%%%%%%%%%%%%%%%%%%%%%%%%%%%%%%%%%%%%%%%%%%%%%%%%%%%%%%%%%%%%%%%%%%%%%%%%%%
\begin{abstract}
We study in-market uniform pricing and out-of-market bid cost recovery (BCR) payments in rolling-window dispatch for real-time power system operations with energy storage resources (ESRs). Due to intertemporal state-of-charge (SOC) constraints, ESR operations are temporally coupled, and existing in-market locational marginal pricing (LMP) may fail to compensate ESR's intertemporal opportunity costs, thereby triggering out-of-market BCR payments. We show that positive BCR is unavoidable when dispatched generators or ESRs have supply-side bids higher than the demand-side bid.
%We show that strictly positive BCR is unavoidable when the conditions
%for the existence of a zero-BCR uniform price fail.
%We characterize conditions under which a zero-BCR uniform price exists and show that, otherwise, strictly positive BCR is unavoidable. 
We further identify an intertemporal coupling indicator associated with binding SOC constraints and demonstrate empirically that positive BCR arises only when this indicator is active, revealing that BCR is fundamentally driven by intertemporal coupling. In the simulation, we compare a BCR-minimizing uniform pricing scheme (UP-BCR) with existing real-time pricing methods under forecast uncertainty and show that UP-BCR substantially reduces BCR and demand payments relative to LMP while maintaining zero merchandising surplus in a copper-plate model.

% We study uniform pricing in  multi-interval electricity markets with energy storage resources under rolling-window dispatch. Because storage operation is coupled over time through state-of-charge constraints, locational marginal pricing may fail to reflect intertemporal opportunity costs, causing market revenues to be insufficient to support dispatch-following incentives and leading to out-of-market bid cost recovery (BCR) payments. To address this issue, we develop uniform-pricing formulations that determine a single settlement price in each interval to reduce total bid cost recovery. We characterize when a zero-bid-cost-recovery uniform price exists and show that, otherwise, strictly positive BCR is unavoidable. We further identify a window-level indicator associated with binding intertemporal state-of-charge constraints and show empirically that positive BCR arises only when this indicator is active. Monte Carlo simulations under forecast uncertainty demonstrate that the proposed pricing schemes substantially reduce BCR and demand payments relative to locational marginal pricing, while maintaining zero merchandising surplus.
\end{abstract}

%%%%%%%%%%%%%%%%%%%%%%%%%%%%%%%%%%%%%%%%%%%%%%%%%%%%%%%%%%%%%%%%%%%%%%%%%%%%%%%%
\section{INTRODUCTION}

The rapid growth of renewable and energy storage resources (ESRs) is transforming real-time power system operations, introducing increased uncertainty, intertemporal constraints, and opportunity costs that existing electricity market pricing mechanisms struggle to capture \cite{EIA2025Battery,CAISO2025BatteryReport}. Grid operators, such as California Independent System Operator (ISO) and New York ISO, increasingly rely on rolling-window economic dispatch to coordinate ESRs operations and shift energy from low-price surplus periods to high-price scarcity periods. However, rolling-window dispatch introduces intertemporal trade-offs: ESR’s current charging and discharging decisions affect its future state-of-charge (SOC) and its ability to respond to future prices. As a result, forward-looking opportunity costs arise and are not fully captured by the existing pricing mechanism, namely locational marginal price (LMP), particularly under forecast uncertainty from renewable variability \cite{Hogan:20, Zhao2019multi}.

This limitation of LMP leads to out-of-merit dispatch, where following the operator’s rolling-window dispatch signals may result in negative profits for ESRs \cite{CAISO2025StorageDesign, CAISO2024StorageBCR}. For example, an ESR may receive a dispatch signal to charge at high prices in anticipation of future scarcity, even when the current price does not compensate for its intertemporal opportunity cost. If the anticipated scarcity together with high prices does not materialize due to the operator's imperfect forecast, ESR faces losses from out-of-merit dispatch and may deviate from operator’s dispatch instructions to avoid deficits. To preserve dispatch-following incentives and maintain grid reliability, operators provide out-of-market uplift payments—bid cost recovery (BCR)—which compensate resources ex post when in-market revenues fall short of ESRs' bid-in costs in the real-time dispatch.

While BCR removes undercompensation and incentivizes resources to follow the realtime dispatch, it introduces discriminative out-of-market payments that reduce price transparency and raise strategic behavior concerns \cite{CAISO2024StorageBCR, Guo&Chen&Tong:21TPS}. ESRs may inflate bids to increase expected BCR uplifts, as their costs are largely opportunity-based—dependent on future prices, SOC, and operational strategies—and thus difficult for operators to verify. In 2023-2024, ESRs  in California ISO (CAISO) accounted for roughly 10–11\% of total BCR uplifts to all generators \cite{CAISO2025BatteryReport}. In extreme cases, individual ESRs have received BCR uplifts ranging from \$100,000 to \$240,000 in a single day.\footnote{This example from March 2022 illustrates the outcomes when ESR strategically bids to exploit BCR. Operators continue to monitor BCR closely and have expressed concern about exploitation, although no extreme events have been reported from 2023 to 2026.} Because unwarranted BCR payments are generally recovered from load, strategic bidding by ESRs shifts costs onto ratepayers and reduces social welfare. 

Motivated by market transparency concerns caused by BCR, this paper analyzes the drivers of BCR, establishes the impossibility result and characterizes conditions for zero-BCR uniform pricing, and studies uniform pricing mechanisms for real-time rolling-window dispatch with ESRs.

\vspace{-0.5em}
\subsection{Related work}

Pricing in multi-interval real-time markets has received considerable attention due to the out-of-market uplift issues introduced by intertemporal operational constraints \cite{Hogan:20, Zhao2019multi}. Temporal locational marginal pricing (TLMP) \cite{Guo&Chen&Tong:21TPS} and its stochastic extensions \cite{ PricingESR, Werner2023} have been shown to eliminate such uplifts even under forecast uncertainty. However, TLMP introduces nonuniform price adders to LMP that depend on whether SOC intertemporal constraints are binding, reducing pricing transparency. As a result, uniform pricing remains prevalent in real-time markets because of its transparency.

To reduce uplift payments while maintaining price transparency, several uniform pricing extensions of LMP have been proposed. Price-preserving multi-interval pricing (PMP) and constraint-preserving multi-interval pricing (CMP) extend LMP to multi-interval settings \cite{Hogan:20, CMP}, and stochastic PMP further incorporates demand uncertainty \cite{ChoPapavasiliou2022}. Other approaches, such as max dispatch cost pricing (MDCP) and max temporal LMP (MTLMP), aim to eliminate BCR uplifts based on marginal production costs \cite{Chen26ramping}. However, these methods primarily focus on conventional generators and do not address the distinct challenges posed by ESRs.

The participation of ESRs alters the structure of the real-time market pricing problem. Since ESRs can both charge and discharge, a single uniform energy price must simultaneously serve as a purchase and sale price, creating inherent tension under intertemporal SOC constraints \cite{PricingESR}. More fundamentally, unlike conventional generators whose costs are often tied to observable marginal production costs like fuel costs from gas or coal, ESR costs are largely intertemporal opportunity costs that depend on future prices, SOC, and operating strategy \cite{PricingESR, CAISO2024StorageBCR}. These costs make it difficult for grid operators to verify the truthful bidding of ESRs and make ESRs susceptible to strategic bidding.

The out-of-market uplift payment issues and the difficulty in identifying ESR strategic bidding behaviors motivate our uniform-pricing research that directly targets BCR reduction for ESRs in real-time power systems and electricity markets. 

%\subsection{Summary of Contributions} 

%The main contributions of this paper are summarized as follows. First, we propose a uniform-pricing framework for real-time rolling-window dispatch with ESRs that selects a single settlement price in each interval by minimizing total bid cost recovery (BCR) subject to revenue adequacy for cleared participants. We also introduce a regularized variant that trades off BCR minimization against proximity to a benchmark marginal price, providing explicit control over the balance between uplift reduction and price consistency. 

%Second, we characterize when a zero-BCR uniform price exists. In particular, a zero-BCR price exists if and only if the maximum marginal cost among active generators and discharging ESRs is no greater than the minimum marginal cost among active charging ESRs. This condition is often violated in practice; for example, when low-cost charging coexists with high-cost generation or discharging, strictly positive BCR becomes unavoidable.

%Third, we identify a window-level indicator $T_{i,t'}$ associated with binding intertemporal SOC constraints and show empirically that positive BCR arises only when this indicator is active under uniform pricing. Monte Carlo simulations under forecast uncertainty further show that the proposed schemes significantly reduce total BCR and demand payments relative to LMP while maintaining zero merchandising surplus.

\subsection{Summary of Contributions} 

We study uniform pricing methods for real-time rolling-window dispatch with ESRs and investigate the origins of out-of-market BCR uplift payments. Renewable uncertainty makes this problem more complex. Our main contributions are two-fold.

First, we characterize conditions under which a uniform price achieves zero BCR for all dispatched participants (Proposition~\ref{prop:zero_bcr}). In particular, a zero-BCR uniform price exists if and only if the maximum marginal cost among dispatched generators and discharging ESRs is no greater than the minimum marginal cost among dispatched charging ESRs. Unfortunately, this condition is often violated in practice; for example, when low-cost charging coexists with high-cost generation or discharging. Therefore, we derive an impossibility result in Corollary~\ref{cor:impossibility} of removing BCR uplifts for ESRs with uniform pricing. Following this, we develop an optimal uniform price minimizing BCR (UP-BCR) in the real-time market.
%subject to revenue adequacy of the grid operator.

Second, we conduct empirical studies under forecast uncertainty and demonstrate that UP-BCR significantly reduces total BCR and demand payments compared to LMP, while maintaining zero merchandising surplus for the grid operator. We further introduce an intertemporal coupling indicator associated with binding SOC constraints and empirically show that positive BCR is observed only when this indicator is active under uniform pricing. This provides insight that BCR is fundamentally driven by intertemporal coupling in rolling-window dispatch.

% ============================================================
%  Section II: Rolling-Window Economic Dispatch
%  CDC Conference Paper  –– v7 (clean constraints)
%  Required packages: amsmath
% ============================================================

\section{Rolling-Window Economic Dispatch}
\label{sec:rolling_window}

We consider a copper-plate model over a dispatch horizon $\mathcal{H} := \{1,\ldots,T\}$. Rolling-window economic dispatch is modeled here with each window start time $t' \in \mathcal{H}$, and the system operator solves a
$W$-period look-ahead economic dispatch problem over the rolling window
$\mathcal{H}_{t'} := \{t',\, t'+1,\,\ldots,\, t'+W-1\}$.
Let $\mathcal{N}:=\{1,\ldots,N\}$ denote the set of ESRs
and $\mathcal{M}:=\{1,\ldots,M\}$ denote the set of conventional generation units.
For each $t \in \mathcal{H}_{t'}$, the operator is given a demand forecast $\hat{d}_t$,
and we assume $\hat{d}_{t'} = d_{t'}$ where $d_{t'}$ is the realized demand.
Let $c_j^\mathrm{G}$ denote the bid-in cost of generator $j \in \mathcal{M}$,
and $c_i^\mathrm{D}$, $c_i^\mathrm{C}$ denote the bid-in costs of ESR $i \in \mathcal{N}$
for discharging and charging, respectively.

For each rolling window $\mathcal{H}_{t'}$, the operator solves a $W$-interval look-ahead economic dispatch (ED) by jointly
optimizing generator outputs $g_{j,t}$, ESR charging/discharging powers
$(g^\mathrm{C}_{i,t}, g^\mathrm{D}_{i,t})$, and SOC $e_{i,t}$ over $t\in\mathcal{H}_{t'}$,
given the initial SOC $s^*_{i,t'-1}$ from the previous interval.
Unless otherwise stated, all constraints apply to every
$j\in\mathcal{M}$, $i\in\mathcal{N}$, and $t\in\mathcal{H}_{t'}$.
The rolling-window ED is denoted by $\mathcal{G}^{\mathrm{RED}}_{t'}$:
\begin{subequations}
\label{eq:RED}
\begin{align}
\min \quad
& \sum_{t\in\mathcal{H}_{t'}}
\!\left(
\sum_{j=1}^{M} c^\mathrm{G}_j\, g_{j,t}
+ \sum_{i=1}^{N}\big(c^\mathrm{D}_i\, g^\mathrm{D}_{i,t} - c^\mathrm{C}_i\, g^\mathrm{C}_{i,t}\big)
\right)
\label{eq:RED_obj}
\\[3pt]
\text{s.t.}\quad
& \sum_{j=1}^{M} g_{j,t} + \sum_{i=1}^{N}\big(g^\mathrm{D}_{i,t}-g^\mathrm{C}_{i,t}\big)
= \hat{d}_t
\;:\; (\lambda_t)
\label{eq:RED_balance}
\\[3pt]
& e_{i,t} = e_{i,t-1} + \eta^\mathrm{C}_i g^\mathrm{C}_{i,t} - \tfrac{1}{\eta^\mathrm{D}_i} g^\mathrm{D}_{i,t},
\quad t > t'
\;:\; (\phi_{i,t})
\label{eq:RED_soc_dyn}
\\[3pt]
& e_{i,t'} = s^*_{i,t'-1} + \eta^\mathrm{C}_i g^\mathrm{C}_{i,t'} - \tfrac{1}{\eta^\mathrm{D}_i} g^\mathrm{D}_{i,t'}
\;:\; (\phi_{i,t'})
\label{eq:RED_soc_init}
\\[3pt]
& \underline{e}_i \le e_{i,t} \le \bar{e}_i
\;:\; (\underline\rho_{i,t},\, \bar\rho_{i,t})
\label{eq:RED_soc_bounds}
\\[3pt]
& 0 \le g_{j,t} \le \bar{g}_j
\;:\; (\underline\mu^\mathrm{G}_{j,t},\, \bar\mu^\mathrm{G}_{j,t})
\label{eq:RED_gen_bounds}
\\[3pt]
& 0 \le g^\mathrm{C}_{i,t} \le \bar{g}^\mathrm{C}_i
\;:\; (\underline\mu^\mathrm{C}_{i,t},\, \bar\mu^\mathrm{C}_{i,t})
\label{eq:RED_charge_bounds}
\\[3pt]
& 0 \le g^\mathrm{D}_{i,t} \le \bar{g}^\mathrm{D}_i
\;:\; (\underline\mu^\mathrm{D}_{i,t},\, \bar\mu^\mathrm{D}_{i,t})
\label{eq:RED_discharge_bounds}
\end{align}
\end{subequations}

\noindent
where $\lambda_t$ is the dual variable for the power-balance constraint
\eqref{eq:RED_balance}.
For simplicity, we adopt a copper-plate model and ignore the ramping constraints.
For each ESR $i$, $\phi_{i,t}$ is the dual variable for the SOC dynamics
\eqref{eq:RED_soc_dyn}--\eqref{eq:RED_soc_init}, capturing the shadow price of
stored energy and coupling decisions across intervals. Storage charging and discharging efficiencies are denoted by $\eta_i^{\mathrm D},\eta_i^{\mathrm C}\in(0,1]$.

\begin{remark}[No simultaneous charging and discharging]
\label{rem:no_simultaneous}
We assume that for each ESR $i$, the bid parameters satisfy
$c_i^\mathrm D > c_i^\mathrm C/(\eta_i^\mathrm D \eta_i^\mathrm C)$, which ensures that the complementarity condition $g_{i,t}^\mathrm C \, g_{i,t}^\mathrm D = 0$ holds at the optimality of~\eqref{eq:RED} \cite{PricingESR}.
\end{remark}

Although \eqref{eq:RED} optimizes over the entire window $\mathcal{H}_{t'}$,
only the dispatch at the binding interval $t'$ is implemented.
Here, the binding interval $t'$ refers to the first interval of the rolling-window $\mathcal{H}_{t'}$.
Let superscript $*$ denote an optimal solution of \eqref{eq:RED}.
The realized rolling-window dispatch signals are
\begin{equation}\label{eq:dispatch_signals}
g^{\mathrm{RED}}_{j,t'} := g^*_{j,t'}, \quad
g^{\mathrm{RED \text{-} C}}_{i,t'} := g_{i,t'}^{\mathrm{C}*}, \quad
g^{\mathrm{RED \text{-} D}}_{i,t'} := g_{i,t'}^{\mathrm{D}*}.
\end{equation}
and the SOC is updated as $s^*_{i,t'} := e^*_{i,t'}$ for all $i\in\mathcal{N}$.
At the next window start $t'+1$, the operator receives updated forecasts
$\{\hat{d}_t\}_{t\in\mathcal{H}_{t'+1}}$ and resolves \eqref{eq:RED} over $\mathcal{H}_{t'+1}$ using $s^*_{i,t'}$ as the initial SOC.

Repeating this for $t'=1,\ldots,T$ yields the rolling-window dispatch policy
$\mathcal{G}^{\mathrm{RED}} := \{\mathcal{G}^{\mathrm{RED}}_{t'}\}_{t'=1}^{T}$.

% ============================================================
%  Section III: Pricing and Settlement
%  Section IV: Performance Metrics
% ============================================================

% ------------------------------------------------------------
\section{Pricing and Settlement}
\label{sec:pricing}

Given the rolling-window dispatch $\mathcal{G}^{\mathrm{RED}}$ from Section~\ref{sec:rolling_window}, 
we first introduce benchmark marginal pricing rules (LMP, TLMP, and MTLMP \cite{Guo&Chen&Tong:21TPS, PricingESR}). Then, we define BCR and finally present a uniform-pricing design minimizing BCR uplift, inspired by \cite{Chen26ramping}.
% --- III-A -------------------------------------------------
\vspace{-0.9em}
\subsection{Benchmark Pricing Rules}

\paragraph{LMP}
The locational marginal price (LMP) at interval \( t \) is given by the optimal dual variable 
associated with the power-balance constraint \eqref{eq:RED_balance} in the binding interval:
\begin{equation}\label{eq:lmp}
  \pi^{\mathrm{LMP}}_t := \lambda^*_t .
\end{equation}

\paragraph{TLMP}
The temporal locational marginal price (TLMP) extends LMP by incorporating intertemporal coupling through the SOC dynamics and defines ESR-specific marginal prices for discharge and charge \cite{PricingESR}. 
For ESR \( i \) at the binding interval \( t \), TLMP for discharge and charge are  
\begin{equation}\label{eq:tlmp}
\tilde \pi^{\mathrm{D}}_{i,t}
:= \lambda^*_t - \frac{\phi^*_{i,t}}{\eta^{\mathrm{D}}_i},
\qquad
\tilde \pi^{\mathrm{C}}_{i,t}
:= \lambda^*_t - \eta^{\mathrm{C}}_i\,\phi^*_{i,t},
\end{equation}
where \( \phi^*_{i,t} \) is the optimal dual variable associated with the SOC-dynamics constraints \eqref{eq:RED_soc_dyn}--\eqref{eq:RED_soc_init}, capturing the intertemporal opportunity cost of storage.

\paragraph{MTLMP}
We generalize the max temporal locational marginal price (MTLMP) for generators in \cite{Chen26ramping} to model \eqref{eq:RED} with ESRs and generators. 
MTLMP is a uniform price in \cite{Chen26ramping} for 
conventional generators.\footnote{A uniform price means that all units---generators, loads, and storage---pay and are paid the same price.} 
With ESRs, the extension of MTLMP here uses separate charging and discharging prices denoted by $\bar\pi^{\mathrm{D}}_t$ and
$\bar\pi^{\mathrm{C}}_t$,
and is therefore nonuniform.
MTLMP defines clipped discharge and charge prices by taking the worst-case TLMP across ESRs. We use the following clipped variant:
\begin{equation}\label{eq:mtlmp}
\begin{aligned}
  \bar\pi^{\mathrm{D}}_t
  &:= \max_{i\in \mathcal{N}}
      \Bigl(\lambda^*_t - \phi^*_{i,t}/\eta^{\mathrm{D}}_i,\ \lambda^*_t\Bigr), \\
  \bar\pi^{\mathrm{C}}_t
  &:= \min_{i \in \mathcal{N}}
      \Bigl(\lambda^*_t - \eta^{\mathrm{C}}_i\phi^*_{i,t},\ \lambda^*_t\Bigr).
\end{aligned}
\end{equation}

% --- III-B -------------------------------------------------
\vspace{-0.9em}
\subsection{Dispatch-Following Profit and BCR}\label{sec:BCR}

For a given settlement rule, let \( \pi_{i,t}^\mathrm{D} \) and \( \pi_{i,t}^\mathrm{C} \) denote the discharge and charge prices for ESR \( i \) in interval \( t \), and let \( \pi_{j,t} \) denote the settlement price for generator \( j \).
Given the rolling-window dispatch defined in~\eqref{eq:dispatch_signals}, the dispatch-following profit of ESR \( i \) in interval \( t \) is

\begin{equation}\label{eq:profit_esr}
\Pi^{\mathrm{ESR}}_{i,t}(\pi_{i,t}^\mathrm{D},\pi_{i,t}^\mathrm{C})
:= (\pi_{i,t}^\mathrm{D} - c_i^\mathrm{D}) g^{\mathrm{RED \text{-} D}}_{i,t}
- (\pi_{i,t}^\mathrm{C}-c_i^\mathrm{C}) g^{\mathrm{RED \text{-} C}}_{i,t}.
\end{equation}
The dispatch-following profit of generator \( j \) in interval \( t \) is  
\begin{equation}\label{eq:profit_gen}
\Pi^{\mathrm{G}}_{j,t}(\pi_{j,t})
:= (\pi_{j,t} - c_j^\mathrm{G})g^{\mathrm{RED}}_{j,t}.
\end{equation}

The per-interval bid cost recovery (BCR) is the make-whole payment required to ensure non-negative profit:
\begin{equation}\label{eq:bcr_def}
\begin{aligned}
  \mathcal{B}_{i,t}^\mathrm{ESR}
  &:= \max\{0,-\Pi^{\mathrm{ESR}}_{i,t}(\pi_{i,t}^\mathrm{D},\pi_{i,t}^\mathrm{C})\}, \\
  \mathcal{B}_{j,t}^\mathrm{G}
  &:= \max\{0,-\Pi^{\mathrm{G}}_{j,t}(\pi_{j,t})\}.
\end{aligned}
\end{equation}

% --- III-C -------------------------------------------------
\subsection{Uniform Pricing via BCR Minimization (UP-BCR)}
\label{sec:up_bcr}

We design a uniform pricing rule that determines a single price $\pi_t\in\mathbb{R}$ per interval to minimize total BCR uplift. 
%\footnote{\textcolor{orange}{UP-BCR can be viewed as a single-bus, ESR-oriented
%specialization of the uniform price with minimum uplift (UPMU)
%framework in~\cite{Chen26ramping}: here the merchandising-surplus
%constraint is redundant since merchandising surplus is identically zero
%under single-bus uniform pricing (Section~\ref{subsec:MS}), and the
%LMP-deviation weight is set to zero to focus solely on BCR
%minimization.}}
\footnote{UP-BCR is a copper-plate, ESR-oriented specialization of
UPMU~\cite{Chen26ramping}, with the LMP-deviation weight set to zero.
The merchandising-surplus constraint is redundant under copper-plate
uniform pricing (Section~\ref{subsec:MS}).}
Under uniform pricing, a uniform price $\pi_t$ is used for both charging and discharging, i.e., \(\pi^\mathrm{D}_{i,t} = \pi^\mathrm{C}_{i,t} = \pi_t, \forall i,t.\)
Substituting into \eqref{eq:profit_esr}, the ESR dispatch-following profit becomes a function of $\pi_t$ only. We introduce scalar variables $\mathcal{B}^{\mathrm{ESR}}_{i,t}, \mathcal{B}^{\mathrm{G}}_{j,t} \in \mathbb{R}_{+}$ to represent the make-whole payments
\footnote{Variables $\mathcal{B}^{\mathrm{ESR}}_{i,t}$ and
$\mathcal{B}^{\mathrm{G}}_{j,t}$ are introduced in~\eqref{eq:UP1}.
For simplicity, we use the same notation in~\eqref{eq:bcr_def} to define BCR.}.

%\begin{equation}\label{eq:bcr_vars}
%\begin{aligned}
%\mathcal{B}^{\mathrm{ESR}}_{i,t} \ge -\Pi^{\mathrm{ESR}}_{i,t}(\pi_t), \quad \mathcal{B}^{\mathrm{ESR}}_{i,t}\ge 0,\quad \forall i,t, \\
%\mathcal{B}^{\mathrm{G}}_{j,t} \ge -\Pi^{\mathrm{G}}_{j,t}(\pi_t), \quad \mathcal{B}^{\mathrm{G}}_{j,t}\ge 0,\quad \forall j,t.
%\end{aligned}
%\end{equation}

The UP-BCR price is defined by the optimal price $\pi_t^*$ of BCR minimization:
% \begin{subequations}\label{eq:UP1}
% \begin{align}
% \min_{\{\pi_t\},\,\{\mathcal{B}^{\mathrm{ESR}}_{i,t}\},\,\{\mathcal{B}^{\mathrm{G}}_{j,t}\}}
% \quad &
% \sum_{t=1}^{T}\left(
% \sum_{i=1}^{N} \mathcal{B}^{\mathrm{ESR}}_{i,t}
% + \sum_{j=1}^{M} \mathcal{B}^{\mathrm{G}}_{j,t}
% \right) \label{eq:UP1_obj} \\
% \text{s.t.}\quad
% & \textcolor{orange}{\eqref{eq:bcr_vars}.} \label{eq:UP1_nn}
% \end{align}
% \end{subequations}
\begin{subequations}\label{eq:UP1}
\begin{align}
\min_{\substack{\{\pi_t\},\,\{\mathcal{B}^{\mathrm{G}}_{j,t}\},\\ \{\mathcal{B}^{\mathrm{ESR}}_{i,t}\}}}   
% \{\pi_t\},\,\{\mathcal{B}^{\mathrm{ESR}}_{i,t}\},\,\{\mathcal{B}^{\mathrm{G}}_{j,t}\}
\quad &
\sum_{t=1}^{T}\left(
\sum_{i=1}^{N} \mathcal{B}^{\mathrm{ESR}}_{i,t}
+ \sum_{j=1}^{M} \mathcal{B}^{\mathrm{G}}_{j,t}
\right) \label{eq:UP1_obj} \\
\text{s.t.}\quad
& %\eqref{eq:bcr_vars}.
\mathcal{B}^{\mathrm{ESR}}_{i,t} \ge -\Pi^{\mathrm{ESR}}_{i,t}(\pi_t),\quad
\mathcal{B}^{\mathrm{ESR}}_{i,t}\ge 0,\quad \forall i,t,
\label{eq:UP1_esr}\\
&\mathcal{B}^{\mathrm{G}}_{j,t} \ge -\Pi^{\mathrm{G}}_{j,t}(\pi_t),\quad
\mathcal{B}^{\mathrm{G}}_{j,t}\ge 0,\quad \forall j,t.
\label{eq:UP1_nn}
\end{align}
\end{subequations}

\begin{remark}
Although \eqref{eq:UP1} is written in a horizon-level form for compactness, the objective and constraints are separable across intervals. In particular, for each interval $t$, the UP-BCR price $\pi^*_t$ depends only on the realized dispatch quantities $g^{\mathrm{RED}}_{j,t}$, $g^{\mathrm{RED\text{-}D}}_{i,t}$, and $g^{\mathrm{RED\text{-}C}}_{i,t}$ defined by~\eqref{eq:dispatch_signals} at that interval. Therefore, \eqref{eq:UP1} is equivalent to solving $T$ independent single-interval problems, and can be implemented online at each interval.
\end{remark}

Since the dispatch-following profit functions in~\eqref{eq:profit_esr} and~\eqref{eq:profit_gen}
are affine in $\pi_t$, the UP-BCR problem~\eqref{eq:UP1} is a linear program. This UP-BCR problem is equivalent to a convex optimization minimizing BCR in the objective directly. See the appendix for the reformulation and a proof of equivalence.

% ============================================================
% ============================================================
%  Section V: Impossibility of Zero-BCR Uniform Pricing
% ============================================================
 
\section{Impossibility of Zero-BCR Uniform Pricing}

Out-of-market uplift payments such as BCR reduce price transparency and can incentivize strategic bidding behavior. A natural question is therefore whether a transparent uniform price can eliminate BCR for all resources. For generators and discharging ESRs, zero BCR requires the uniform price to be above their bid costs; while for charging ESRs, zero BCR requires the price to be below their charging bid costs. Therefore, once both charging and discharging actions are present, these requirements may conflict. In this section, we characterize the feasibility of zero-BCR uniform pricing and show that, when the resulting price bounds are incompatible, it is impossible to reduce BCR to zero for all resources.
 
To analyze the BCR uplifts for all dispatched market participants, we define the set of dispatched units for a fixed interval~$t$ as
\begin{equation} \label{eq:active_sets}
\begin{aligned}
    \mathcal{G}_t^\mathrm{+} := \{j : g_{j,t}^{\mathrm{RED}} > 0\},       \\
    \mathcal{S}_t^\mathrm{D,+} := \{i : g_{i,t}^{\mathrm{RED\text{-}D}} > 0\},     \\
    \mathcal{S}_t^\mathrm{C,+} := \{k : g_{k,t}^{\mathrm{RED\text{-}C}} > 0\}.
\end{aligned}
\end{equation}

\begin{remark}
\label{rem:nontrivial_case}
If $\mathcal{S}_t^\mathrm{C,+} = \emptyset$, there is no price cap for the uniform market price $\pi_t$, and a zero-BCR price trivially exists by setting
$\pi_t \geq \max\!\left\{\underset{j\in\mathcal{G}_t^\mathrm{+}}{\max} c_j^\mathrm{G},\; \underset{i\in\mathcal{S}_t^\mathrm{D,+}} {\max}c_i^\mathrm{D}\right\}$.
Symmetrically, if $\mathcal{G}_t^\mathrm{+} = \mathcal{S}_t^\mathrm{D,+} = \emptyset$, any
$\pi_t \leq \underset{k\in\mathcal{S}_t^\mathrm{C,+}}{\min} c_k^\mathrm{C}$
achieves zero BCR.
Therefore, for a fixed interval $t$, we restrict attention to the nontrivial case in which
$(\mathcal{G}_t^\mathrm{+}\cup\mathcal{S}_t^\mathrm{D,+})\neq\emptyset$
and $\mathcal{S}_t^\mathrm{C,+}\neq\emptyset$.
\end{remark}

For a given realized dispatch at interval $t$, zero-BCR feasibility asks
whether there exists a uniform price $\pi_t$ such that all dispatched participants
earn nonnegative dispatch-following profit. Because the same price applies to
both charging and discharging, such a price exists only when the corresponding
lower and upper bounds on $\pi_t$ are compatible.

\begin{proposition}[Zero-BCR Feasibility Condition]
\label{prop:zero_bcr}
For a fixed interval $t$ with nonempty $\mathcal{G}_t^\mathrm{+}, \mathcal{S}_t^\mathrm{D,+}$,
and $\mathcal{S}_t^\mathrm{C,+}$, a uniform price $\pi_t$ achieves zero BCR for all dispatched
participants if and only if
\begin{equation}
  \label{eq:feasibility}
  \max\!\left\{
    \max_{j \in \mathcal{G}_t^\mathrm{+}} c_j^\mathrm{G},\;
    \max_{i \in \mathcal{S}_t^\mathrm{D,+}} c_i^\mathrm{D}
  \right\}
  \leq
  \min_{k \in \mathcal{S}_t^\mathrm{C,+}} c_k^\mathrm{C}.
\end{equation}
\end{proposition}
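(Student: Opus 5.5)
The plan is to reduce zero BCR to a finite family of linear inequalities in the single scalar $\pi_t$. Then the statement becomes the elementary fact that an interval $[L_t,U_t]$ is nonempty if and only if $L_t\le U_t$. I read the statement as "there exists a uniform price $\pi_t$ achieving zero BCR if and only if \eqref{eq:feasibility} holds." Moreover, the set of such prices is exactly $[L_t,U_t]$, where $L_t$ is the left-hand side and $U_t$ the right-hand side of \eqref{eq:feasibility}.

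\textbf{Step 1: participant-wise conditions.} By \eqref{eq:bcr_def}, zero BCR for every participant is equivalent to nonnegative dispatch-following profit for every participant.

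For generators, \eqref{eq:profit_gen} gives $\Pi^{\mathrm G}_{j,t}(\pi_t)=(\pi_t-c^{\mathrm G}_j)g^{\mathrm{RED}}_{j,t}$. For $j\notin\mathcal G_t^+$ this is identically zero, so those units impose no constraint. For $j\in\mathcal G_t^+$ we may divide by $g^{\mathrm{RED}}_{j,t}>0$, so the condition is $\pi_t\ge c^{\mathrm G}_j$.

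For ESRs, Remark~\ref{rem:no_simultaneous} gives $g^{\mathrm{RED\text{-}C}}_{i,t}g^{\mathrm{RED\text{-}D}}_{i,t}=0$. Hence each ESR is either idle, purely discharging, or purely charging at $t$, and \eqref{eq:profit_esr} has only one nonzero term:
\begin{itemize}
\item if $i\in\mathcal S_t^{\mathrm{D},+}$, the profit is $(\pi_t-c_i^{\mathrm D})g^{\mathrm{RED\text{-}D}}_{i,t}\ge0$, which is equivalent to $\pi_t\ge c_i^{\mathrm D}$;
\item if $k\in\mathcal S_t^{\mathrm{C},+}$, the profit is $-(\pi_t-c_k^{\mathrm C})g^{\mathrm{RED\text{-}C}}_{k,t}\ge0$, which is equivalent to $\pi_t\le c_k^{\mathrm C}$;
\item if the ESR is idle, there is no constraint.
\end{itemize}

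\textbf{Step 2: aggregation.} Collecting these conditions, zero BCR for all participants holds if and only if $L_t\le\pi_t\le U_t$. Here
$L_t:=\max\{\max_{j\in\mathcal G_t^+}c_j^{\mathrm G},\max_{i\in\mathcal S_t^{\mathrm D,+}}c_i^{\mathrm D}\}$ and $U_t:=\min_{k\in\mathcal S_t^{\mathrm C,+}}c_k^{\mathrm C}$. Both are finite because the index sets are finite and, by hypothesis, nonempty.

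\emph{Sufficiency:} if \eqref{eq:feasibility} holds, any $\pi_t\in[L_t,U_t]$, for instance $\pi_t=L_t$, satisfies every inequality from Step 1, so all BCR terms vanish.

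\emph{Necessity:} if some $\pi_t$ gives zero BCR, then Step 1 yields $\pi_t\ge c$ for every supply-side bid $c$ and $\pi_t\le c'$ for every charging bid $c'$. Taking the maximum and the minimum respectively gives $L_t\le\pi_t\le U_t$.

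\textbf{Main obstacle.} The only delicate point is Step 1 for ESRs. Without Remark~\ref{rem:no_simultaneous}, an ESR that charges and discharges simultaneously would yield a single combined affine inequality rather than separate bounds, and the clean characterization could fail. I would therefore invoke the complementarity $g^{\mathrm C}g^{\mathrm D}=0$ explicitly.

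I would also note two further points. First, strict positivity of dispatch in \eqref{eq:active_sets} is what permits dividing by the quantities in Step 1. Second, the nonemptiness assumptions only ensure that $L_t$ and $U_t$ are finite; the degenerate cases are those covered by Remark~\ref{rem:nontrivial_case}.
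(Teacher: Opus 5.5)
Your proposal is correct and matches the paper's proof essentially step for step: you reduce zero BCR to nonnegative profits, divide by the strictly positive dispatch quantities to get $\pi_t \ge c_j^{\mathrm G}$, $\pi_t \ge c_i^{\mathrm D}$ and $\pi_t \le c_k^{\mathrm C}$, and observe that the resulting price interval is nonempty exactly when \eqref{eq:feasibility} holds. Your explicit appeal to the complementarity in Remark~\ref{rem:no_simultaneous} is a worthwhile addition, since the paper uses it only implicitly when it writes each ESR's profit with a single term, and without it the necessity direction can indeed fail for an ESR that charges and discharges simultaneously.
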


\begin{corollary}[Impossibility of Zero Total BCR]
\label{cor:impossibility}
For a fixed interval~$t$ with nonempty $\mathcal{G}_t^\mathrm{+}, \mathcal{S}_t^\mathrm{D,+}$,
and $\mathcal{S}_t^\mathrm{C,+}$, when condition~\eqref{eq:feasibility} fails, zero BCR is infeasible for the UP-BCR problem in~\eqref{eq:UP1}.
%
%\begin{equation}
  %\min_{\pi_t}
 % \left(
    %\sum_{j\in\mathcal{G}_t^\mathrm{+}} \mathcal B_{j,t}^\mathrm{G}
  %  + %\sum_{i\in\mathcal{S}_t^\mathrm{D,+}} \mathcal B_{i,t}^\mathrm{D}
  %  + \sum_{k\in\mathcal{S}_t^\mathrm{C,+}} \mathcal B_{k,t}^\mathrm{C}
 % \right)
 % > 0.
%\end{equation}
\end{corollary}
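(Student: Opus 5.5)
The plan is to derive the corollary directly from Proposition~\ref{prop:zero_bcr}, using the separability remark to work with one interval $t$ at a time. Fix $t$ and assume the sets $\mathcal{G}_t^\mathrm{+}$, $\mathcal{S}_t^\mathrm{D,+}$ and $\mathcal{S}_t^\mathrm{C,+}$ are nonempty. Because~\eqref{eq:UP1} decouples across intervals, its optimal value is zero exactly when, in every interval, some price $\pi_t$ allows $\mathcal{B}^{\mathrm{ESR}}_{i,t}=\mathcal{B}^{\mathrm{G}}_{j,t}=0$ for all $i,j$.

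The first step translates the constraints. At a feasible point with all $\mathcal{B}$ variables equal to zero, \eqref{eq:UP1_esr}--\eqref{eq:UP1_nn} require $\Pi^{\mathrm{G}}_{j,t}(\pi_t)\ge 0$ and $\Pi^{\mathrm{ESR}}_{i,t}(\pi_t)\ge 0$ for every unit. Units with zero dispatch have identically zero profit, so they impose nothing.

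For a dispatched generator $j\in\mathcal{G}_t^\mathrm{+}$, nonnegative profit is equivalent to $\pi_t\ge c^\mathrm{G}_j$, since $g^{\mathrm{RED}}_{j,t}>0$. For ESRs, Remark~\ref{rem:no_simultaneous} gives $g^{\mathrm{C}}g^{\mathrm{D}}=0$. Hence each ESR in $\mathcal{S}_t^\mathrm{D,+}$ has profit $(\pi_t-c_i^\mathrm{D})g^{\mathrm{RED\text{-}D}}_{i,t}$, which is nonnegative iff $\pi_t\ge c_i^\mathrm{D}$. Each ESR in $\mathcal{S}_t^\mathrm{C,+}$ has profit $-(\pi_t-c_k^\mathrm{C})g^{\mathrm{RED\text{-}C}}_{k,t}$, which is nonnegative iff $\pi_t\le c_k^\mathrm{C}$.

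These conditions say that $\pi_t$ must lie in the interval $[L_t,U_t]$, where $L_t$ is the left side of~\eqref{eq:feasibility} and $U_t$ is its right side. This interval is nonempty exactly when~\eqref{eq:feasibility} holds, which is the content of Proposition~\ref{prop:zero_bcr}.

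Now suppose~\eqref{eq:feasibility} fails, so $L_t>U_t$. Then no $\pi_t$ satisfies all the constraints at once. For every $\pi_t$, either $\pi_t<L_t$, in which case the unit attaining $L_t$ has strictly negative profit, or $\pi_t>U_t$, in which case the charging ESR attaining $U_t$ has strictly negative profit. In either case the corresponding $\mathcal{B}$ variable in~\eqref{eq:UP1} must be strictly positive.

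To show the infimum is positive and not merely that zero is unattained, I would bound the per-interval objective from below. It is at least the function
\[
\max\{(L_t-\pi_t)\,g_L,\ (\pi_t-U_t)\,g_U\}
\]
where $g_L$ and $g_U$ are the positive dispatch quantities of the units attaining $L_t$ and $U_t$. This function is continuous and piecewise affine in $\pi_t$ and strictly positive everywhere. It grows without bound as $|\pi_t|\to\infty$, so it attains a positive minimum. That minimum equals $(L_t-U_t)g_Lg_U/(g_L+g_U)>0$, reached where the two pieces cross. Therefore the optimal value of~\eqref{eq:UP1} is strictly positive, and zero BCR is infeasible.

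The main obstacle is small: it is the care needed to pass from ``zero BCR is unattainable'' to ``the optimum is strictly positive,'' which the explicit lower bound handles. The only other subtlety is the reliance on Remark~\ref{rem:no_simultaneous}, which guarantees that each dispatched ESR falls into exactly one of the two sets.
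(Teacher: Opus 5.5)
Your proposal is correct. Its backbone is the same as the paper's: both reduce to Proposition~\ref{prop:zero_bcr}, since no single $\pi_t$ can lie in your empty interval $[L_t,U_t]$, and both conclude that the optimal value of~\eqref{eq:UP1} is strictly positive.

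The difference is in the last step. The paper argues by contradiction: if the optimal value were zero, every BCR variable would vanish at an optimal solution. That would give a zero-BCR price, contradicting the proposition. This tacitly uses that the optimum is attained, which holds because~\eqref{eq:UP1} is a feasible LP bounded below by zero.

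You instead bound the objective at every feasible point by $\max\{(L_t-\pi_t)g_L,\,(\pi_t-U_t)g_U\}\ge (L_t-U_t)g_Lg_U/(g_L+g_U)$. This avoids any appeal to attainment. It also gives a quantitative floor on the unavoidable BCR that scales with the violation $L_t-U_t$ of~\eqref{eq:feasibility}. The cost is a slightly longer argument.

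You are also right to state explicitly that Remark~\ref{rem:no_simultaneous} makes the unit attaining $L_t$ and the charging ESR attaining $U_t$ distinct. If they were the same ESR, their terms would merge into a single $\max\{0,\cdot\}$ and your sum bound could fail. The paper uses this only implicitly, when it writes each ESR's profit as a single term in the proof of Proposition~\ref{prop:zero_bcr}.
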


Proofs of Proposition~\ref{prop:zero_bcr} and Corollary~\ref{cor:impossibility} are in the appendix. Condition~\eqref{eq:feasibility} fails when
low-cost charging coexists with high-cost generation
or discharging, creating conflicting price requirements.
Such a conflict does not arise under single-interval real-time dispatch.
It emerges under multi-interval rolling-window dispatch with binding
state-of-charge constraints. Section~\ref{sec:toy} gives a concrete example
illustrating Corollary~\ref{cor:impossibility}.
% ============================================================
%  Section VI: Toy Example
% ============================================================
 
\section{Toy Example for Impossibility Result}
\label{sec:toy}
 
We illustrate Corollary~\ref{cor:impossibility} with a concrete configuration of dispatched units in which one ESR
discharges while another ESR charges, demonstrating that
condition~\eqref{eq:feasibility} can fail.

\paragraph{Setup}
Consider a copper-plate system with a rolling window of length $W=2$.
The realized demand at the binding interval is $d_t=22~\mathrm{MW}$,
and the forecast demand for the second interval is
$\hat d_{t+1}=27~\mathrm{MW}$.
The ESR bids are
$(c_1^{\mathrm D},c_1^{\mathrm C})=(9,5)~\$/\mathrm{MWh}$ and
$(c_2^{\mathrm D},c_2^{\mathrm C})=(20,1)~\$/\mathrm{MWh}$,
and the generator bid is $c_1^{\mathrm G}=10~\$/\mathrm{MWh}$.
The generator capacity is $\bar g_1=20~\mathrm{MW}$.
For both ESRs,
$\bar g_i^{\mathrm C}=\bar g_i^{\mathrm D}=4~\mathrm{MW}$,
$\underline e_i=2~\mathrm{MWh}$,
$\bar e_i=12~\mathrm{MWh}$, and
$\eta_i^{\mathrm C}=\eta_i^{\mathrm D}=1$.
The initial SOCs are
$s^*_{1,t-1}=10~\mathrm{MWh}$ and
$s^*_{2,t-1}=4~\mathrm{MWh}$.

Solving \eqref{eq:RED} over the two-interval rolling window gives
the optimal dispatch for $t$ and $t+1$, as summarized in
Table~\ref{tab:toy_dispatch}.

{\setlength{\textfloatsep}{6pt}
\begin{table}[t]
\centering
\caption{Dispatch results for the toy example.}
\label{tab:toy_dispatch}
\setlength{\tabcolsep}{3pt}
\begin{tabular}{c|cc|ccccc}
\hline
 & \multicolumn{2}{c|}{SOC ($\mathrm{MWh}$)}
 & \multicolumn{5}{c}{Dispatch ($\mathrm{MW}$)} \\
 & ESR 1 & ESR 2
 & $g_1^*$ & $g_1^{\mathrm D*}$ & $g_1^{\mathrm C*}$
 & $g_2^{\mathrm D*}$ & $g_2^{\mathrm C*}$ \\
\hline
$t-1$ & $10$ & $4$ & -- & -- & -- & -- & -- \\
$t$   & $6$  & $5$ & $19$ & $4$ & $0$ & $0$ & $1$ \\
$t+1$ & $2$  & $2$ & $20$ & $4$ & $0$ & $3$ & $0$ \\
\hline
\end{tabular}
\end{table}
}

By the definition of the set of dispatched units in \eqref{eq:active_sets}, we have
$\mathcal{G}_t^\mathrm{+} = \{1\}$,
$\mathcal{S}_t^\mathrm{D,+} = \{1\}$, and $\mathcal{S}_t^\mathrm{C,+} = \{2\}$.
Checking condition~\eqref{eq:feasibility}, where all bid costs are in~$\$/\mathrm{MWh}$, we obtain
\begin{equation*}
  \max\!\left\{
    \underbrace{c_1^\mathrm{G}}_{10},\,
    \underbrace{c_1^\mathrm{D}}_{9}
  \right\} = 10 > 1 = 
  \underbrace{c_2^{\mathrm C}}_{
    \substack{
      \min\; c_k^{\mathrm C}\\
      \scriptscriptstyle k\in\mathcal S_t^{\mathrm C,+}
    }}.
\end{equation*}
The condition fails; by Corollary~\ref{cor:impossibility},
no uniform price can achieve zero BCR.

\paragraph{Impossibility of nonnegative profit and zero BCR}
Let $\pi$ denote the uniform price, i.e.,
$\pi_{i,t}^{\mathrm D}=\pi_{i,t}^{\mathrm C}=\pi$.
From \eqref{eq:profit_esr}--\eqref{eq:profit_gen}, zero BCR for all
dispatched participants requires nonnegative profits. Since
$g^{\mathrm{RED\text{-}D}}_{1,t}$,
$g^{\mathrm{RED\text{-}C}}_{2,t}$, and
$g^{\mathrm{RED}}_{1,t}$ are positive, we have
\begin{align*}
\Pi_{1,t}^{\mathrm{ESR}}(\pi,\pi)
&= (\pi-9)g^{\mathrm{RED\text{-}D}}_{1,t}\geq0
\;\Rightarrow\; \pi\geq9,\\
\Pi_{2,t}^{\mathrm{ESR}}(\pi,\pi)
&= (1-\pi)g^{\mathrm{RED\text{-}C}}_{2,t}\geq0
\;\Rightarrow\; \pi\leq1,\\
\Pi_{1,t}^{\mathrm{G}}(\pi)
&= (\pi-10)g^{\mathrm{RED}}_{1,t}\geq0
\;\Rightarrow\; \pi\geq10.
\end{align*}
These conditions cannot hold simultaneously. Therefore, no uniform
price can achieve zero BCR while supporting nonnegative profits for
all dispatched participants; hence the total BCR is strictly positive,
illustrating Corollary~\ref{cor:impossibility}.

\section{Performance Metrics}
\label{sec:metrics}

The following empirical study evaluates pricing rules using metrics including demand payments, generator revenues, and generator profits. Most importantly, we introduce an intertemporal coupling indicator that reveals the underlying driver of BCR under uniform pricing.

% --- IV-D -------------------------------------------------
\vspace{-0.5em}
\subsection{Intertemporal Coupling Indicator}
\label{subsec:binding_indicator}

To understand when and why BCR arises under uniform pricing, we establish the intertemporal coupling indicator $T_{i,t'}$ in \eqref{eq:window_ind} that captures whether ESR $i$ takes charging/discharging actions and hits binding intertemporal SOC constraints within a rolling window $\mathcal{H}_{t'}$.
\begin{subequations}\label{eq:window_ind}
\begin{align}
T_{i,t'}&:=\mathbf{1}\{\exists\,t\in\mathcal{H}_{t'}:\tilde\delta_{i,t}=1\},\\
\tilde\delta_{i,t}&:=\mathbf{1}\{(\delta^\mathrm{AT}_{i,t}=1)\ \wedge\ (\delta_{i,t}=1)\}.
\end{align}    
\end{subequations}
Here, $\delta^\mathrm{AT}_{i,t}:=\mathbf{1}\{\delta_{g^\mathrm{D}_{i,t}}+\delta_{g^\mathrm{C}_{i,t}}\neq 2\}$ represents the activity indicator, where $\delta_{g^\mathrm{C}_{i,t}}:=\mathbf{1}\{g^\mathrm{C}_{i,t}=0\}$ and $\delta_{g^\mathrm{D}_{i,t}}:=\mathbf{1}\{g^\mathrm{D}_{i,t}=0\}$. It equals zero when ESR $i$ is idle, i.e., when both charging and discharging are zero.

Denote $\delta_{i,t} := \mathbf{1}\{e_{i,t}=\bar e_i\} + \mathbf{1}\{e_{i,t}=\underline e_i\}$ as  the SOC-boundary indicator. Since the SOC cannot simultaneously reach both bounds, we have $\delta_{i,t} \in \{0,1\}$. Specifically, we have $\delta_{i,t} = 1$ when ESR $i$ hits either the 
upper or lower SOC bound at interval $t$, and $\delta_{i,t} = 0$ otherwise.

% --- IV-A -------------------------------------------------
\vspace{-0.5em}
\subsection{Demand Payment}

The total demand payment under a given pricing rule is % defined as
\begin{equation}
\Pi^{\mathrm{DP}} := \sum_{t=1}^{T} \pi_t^{\mathrm{load}} \, d_t,
\end{equation}
where $d_t$ is the realized demand at interval $t$. The load price $\pi_t^{\mathrm{load}}$ is given by
$\pi_t^{\mathrm{load}} = \lambda_t^*$ under LMP and TLMP,
$\pi_t^{\mathrm{load}} = \bar{\pi}_t^\mathrm C$ under MTLMP,
and $\pi_t^{\mathrm{load}} = \pi_t$ under uniform pricing.

% --- IV-B -------------------------------------------------
\subsection{Generator Revenue and Profit}
The total generator revenue under a given pricing rule is %defined as
\begin{equation}
\Pi^{\mathrm{GenPay}} := \sum_{t=1}^{T} \sum_{j=1}^{M} \pi_t^{\mathrm{G}} \, g^{\mathrm{RED}}_{j,t},
\end{equation}
where $g^{\mathrm{RED}}_{j,t}$ is the realized dispatch of generator $j$ at interval $t$, and $\pi_t^{\mathrm{G}}$ denotes the settlement price applied to generators.

The generator price $\pi_t^{\mathrm{G}}$ is given by
$\pi_t^{\mathrm{G}} = \lambda_t^*$ under LMP and TLMP,
$\pi_t^{\mathrm{G}} = \bar{\pi}_t^\mathrm D$ under MTLMP,
and $\pi_t^{\mathrm{G}} = \pi_t$ under uniform pricing.

The total generator profit is defined as
\begin{equation}
\Pi^{\mathrm{GenProfit}} := \sum_{t=1}^{T} \sum_{j=1}^{M} 
\left( \pi_t^{\mathrm{G}} - c_j^{\mathrm{G}} \right) g^{\mathrm{RED}}_{j,t}.
\end{equation}

% --- IV-C -------------------------------------------------
\subsection{Merchandising Surplus}
\label{subsec:MS}

The merchandising surplus (MS) under a given in-market pricing rule is defined as the net revenue of the power system operator:
\begin{equation}
\mathrm{MS} := \Pi^{\mathrm{DP}} - \Pi^{\mathrm{GenPay}} 
+ \sum_{t=1}^{T} \sum_{i=1}^{N} 
\left( \pi_{i,t}^{\mathrm{C}} g_{i,t}^{\mathrm{\mathrm{RED \text{-} C}}} 
- \pi_{i,t}^{\mathrm{D}} g_{i,t}^{\mathrm{\mathrm{RED \text{-} D}}} \right),
\end{equation}
where $(\pi_{i,t}^{\mathrm{C}}, \pi_{i,t}^{\mathrm{D}})$ denote the settlement prices for ESR charging and discharging, respectively, under the corresponding pricing rule. For simplicity, we assume that BCR payments to generators and ESRs are recovered from demand, so the system operator's net position from out-of-market BCR settlements is zero.

Under LMP and uniform pricing, a single price is applied so that $\pi_{i,t}^{\mathrm{D}} = \pi_{i,t}^{\mathrm{C}} = \pi_t$, while under TLMP and MTLMP, the charge and discharge prices may differ as defined in \eqref{eq:tlmp}--\eqref{eq:mtlmp}.

This metric reflects the revenue adequacy of the market
operator. Under uniform pricing with a single price applied
to both load and all participants, the merchandising surplus
is identically zero in a copper-plate model, following power balance in~\eqref{eq:RED_balance} directly.

\section{Simulation Results}

% ------------------------------------------------------------
\subsection{Parameter Settings}
% ------------------------------------------------------------

We consider a copper-plate model with three conventional generators
and two ESRs over a 24-hour horizon ($T=24$), with a four-hour
look-ahead window ($W=4$). We adopt the same ESR power and SOC limits,
efficiencies, and ESR~1 bid as in Section~\ref{sec:toy}.
The generator capacities are $(20,20,100)~\mathrm{MW}$ with marginal
costs $(c_1^{\mathrm G},c_2^{\mathrm G},c_3^{\mathrm G})
=(10,63,100)~\$/\mathrm{MWh}$.
Unlike the toy example, the ESR~2 bid is
$(c_2^{\mathrm D},c_2^{\mathrm C})=(12,8)~\$/\mathrm{MWh}$.

Demand at each interval $t$ is independently drawn from a Gaussian distribution
$\mathcal{N}(\mu_t, 4)$, where $\{\mu_t\}$ follows a typical
daily load profile as illustrated in Fig.~\ref{fig:demand} (top).
Rolling forecasts are subject to lead-time-dependent Gaussian
errors ($\sigma_1 = 1$, $\sigma_2 = 2$, $\sigma_3 = 3$).
We conduct 200 independent Monte Carlo scenarios; all pricing
rules are evaluated under the same realized dispatch to ensure
a fair comparison, and metrics are averaged over all scenarios.

\begin{figure}[t]
  \centering
  \includegraphics[width=0.7\columnwidth]{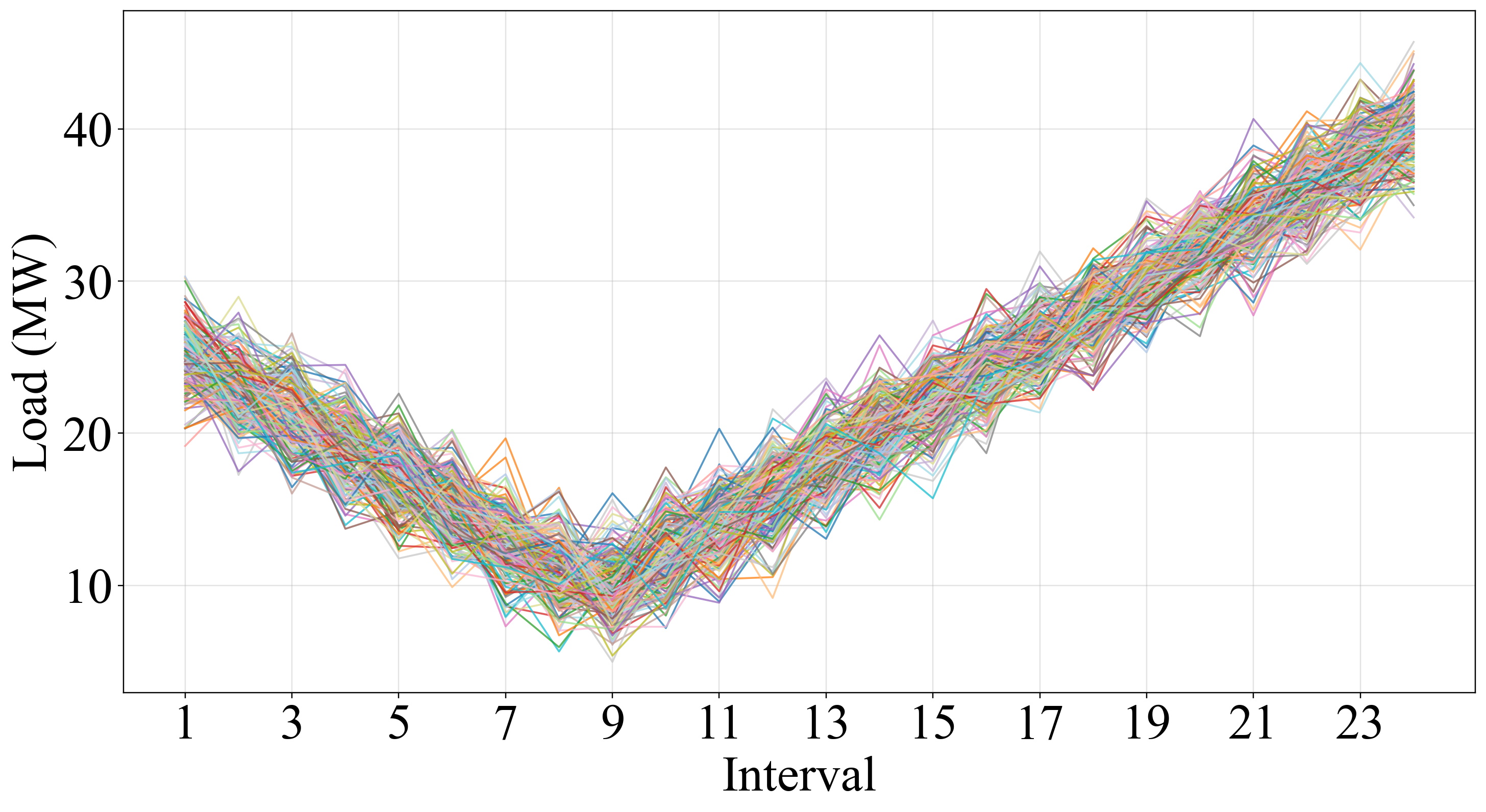}
  
   \includegraphics[width=0.75\columnwidth]{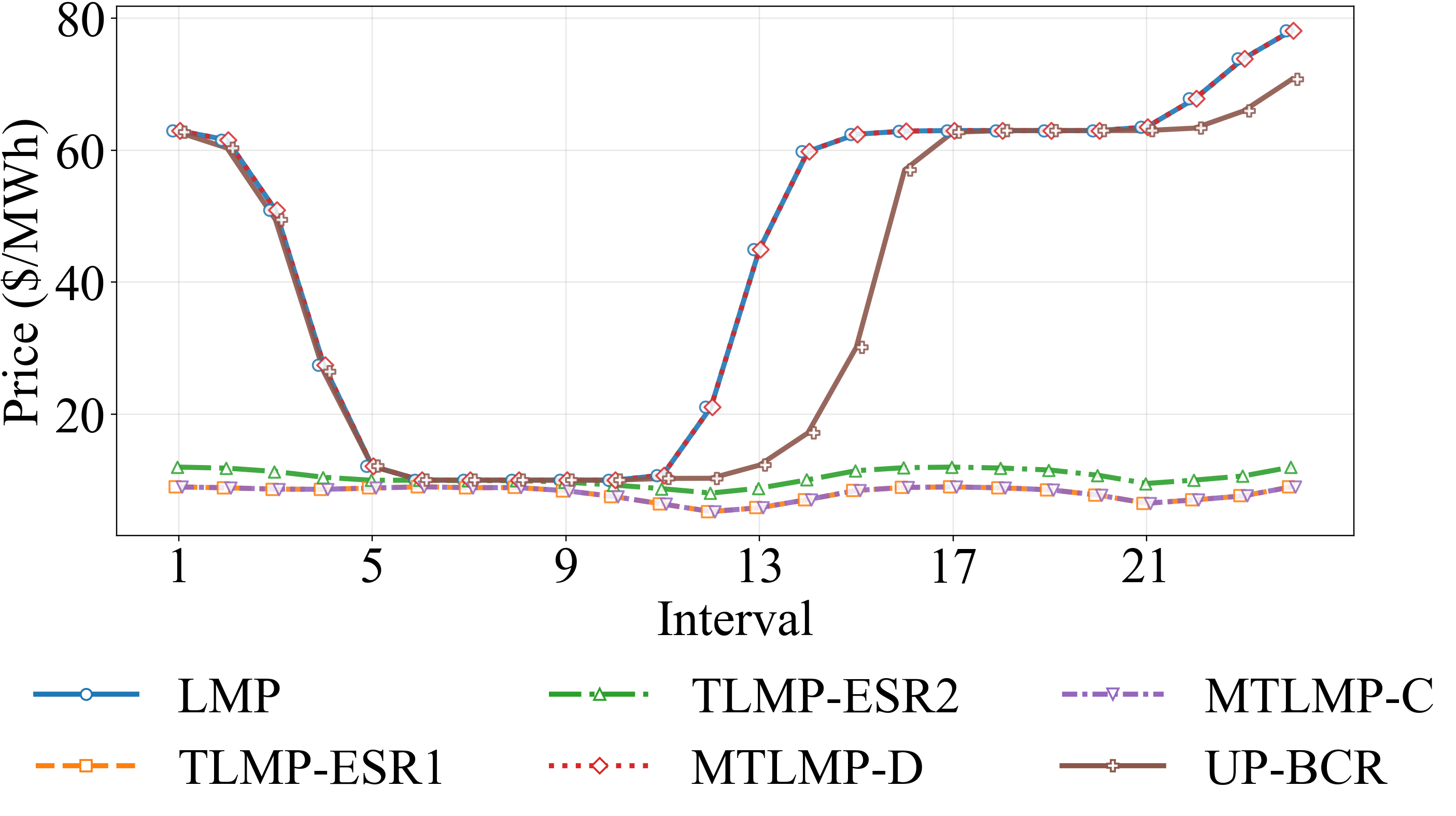}
  \caption{24-hour profiles for load and prices. Top: Daily load profile (200 Monte Carlo scenarios). Bottom: Mean price trajectories per interval.}
  \label{fig:demand}
\end{figure}

% ------------------------------------------------------------
\subsection{BCR Performance and SOC Constraints}
% ------------------------------------------------------------

%\begin{figure}[htbp]
  %\centering
 % \includegraphics[width=0.45\linewidth]{figure/TotalBCR_mean.png} 
%\includegraphics[width=0.45\columnwidth]{figure/Total_BCR_ESR_plus_Gen_total_boxplot.png}
 % \includegraphics[width=0.45\columnwidth]{figure/GenProfit_mean.png}
 % \includegraphics[width=0.45\linewidth]{figure/MS_mean.png}
  % \vspace{-2em}
 % \caption{\small \textcolor{red}{Top left: Mean total BCR per interval. Top right: Distribution of total BCR across 200 scenarios. Bottom left: Mean generator profit per interval. Bottom right: Mean merchandising surplus per interval.}}
  %\label{fig:sim}
%\end{figure}

\begin{figure}[t]
  \centering
  \includegraphics[width=0.48\columnwidth]{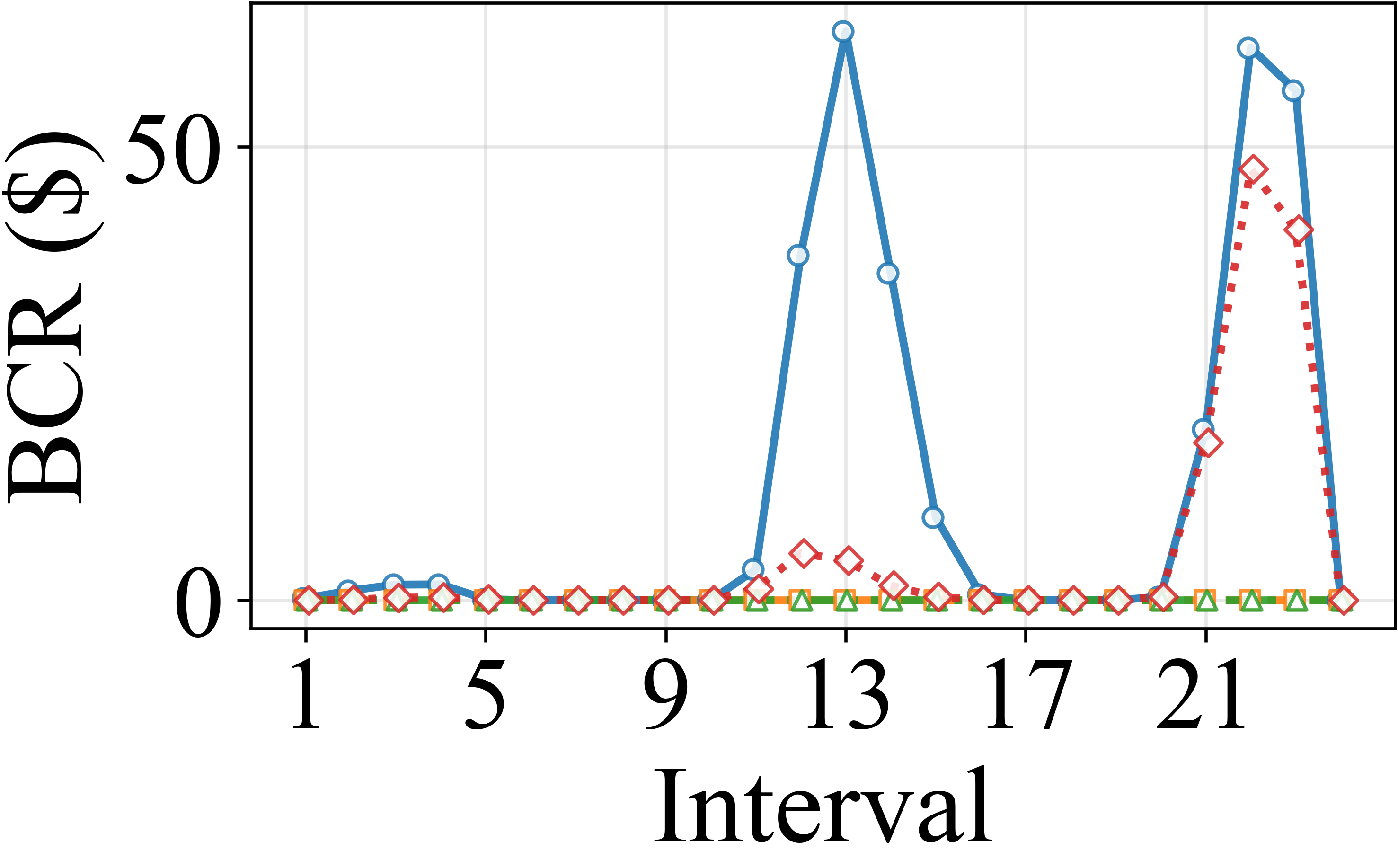}\hfill
  \includegraphics[width=0.48\columnwidth]{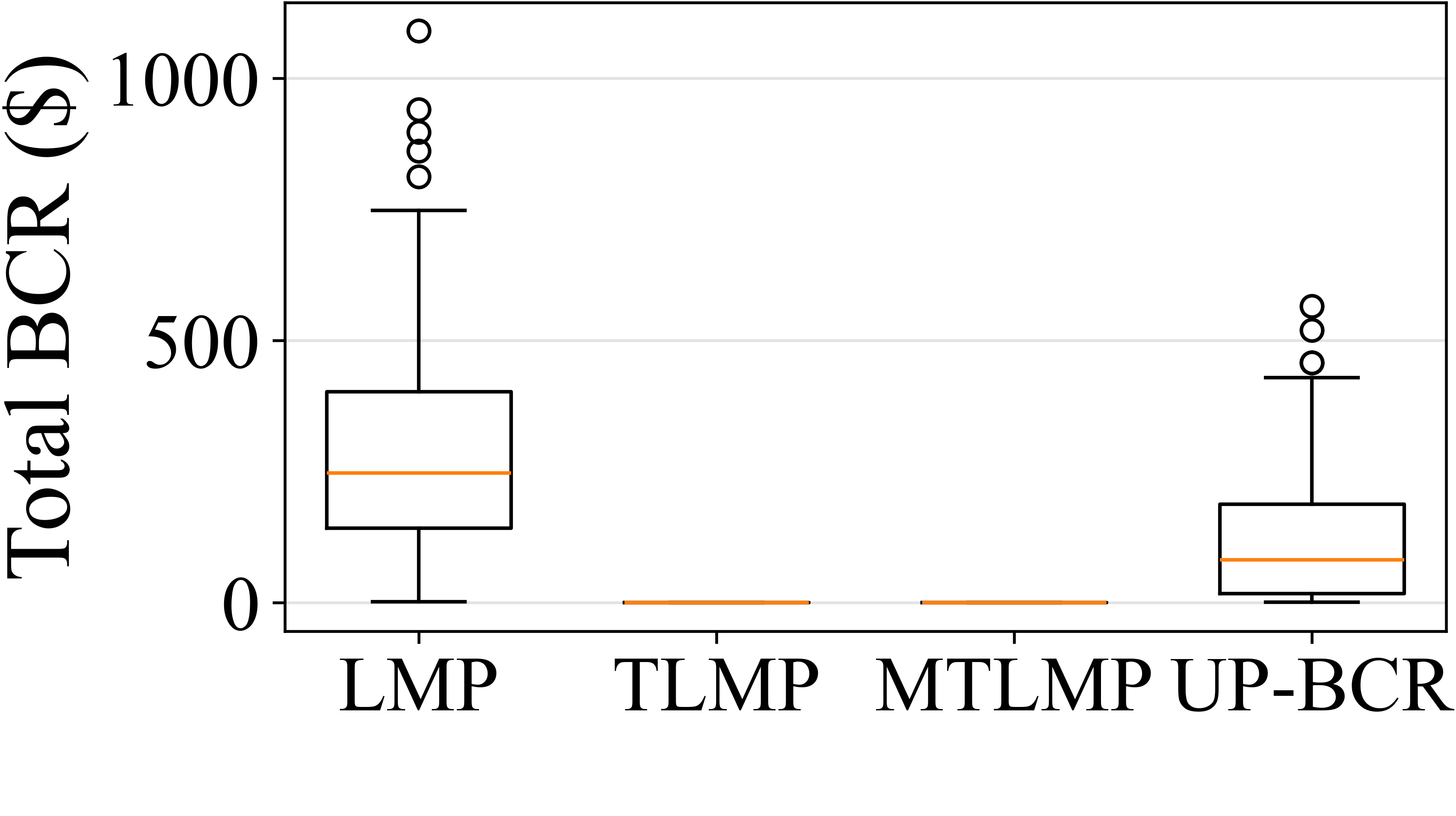}

  \vspace{0.2em}

  \includegraphics[width=0.48\columnwidth]{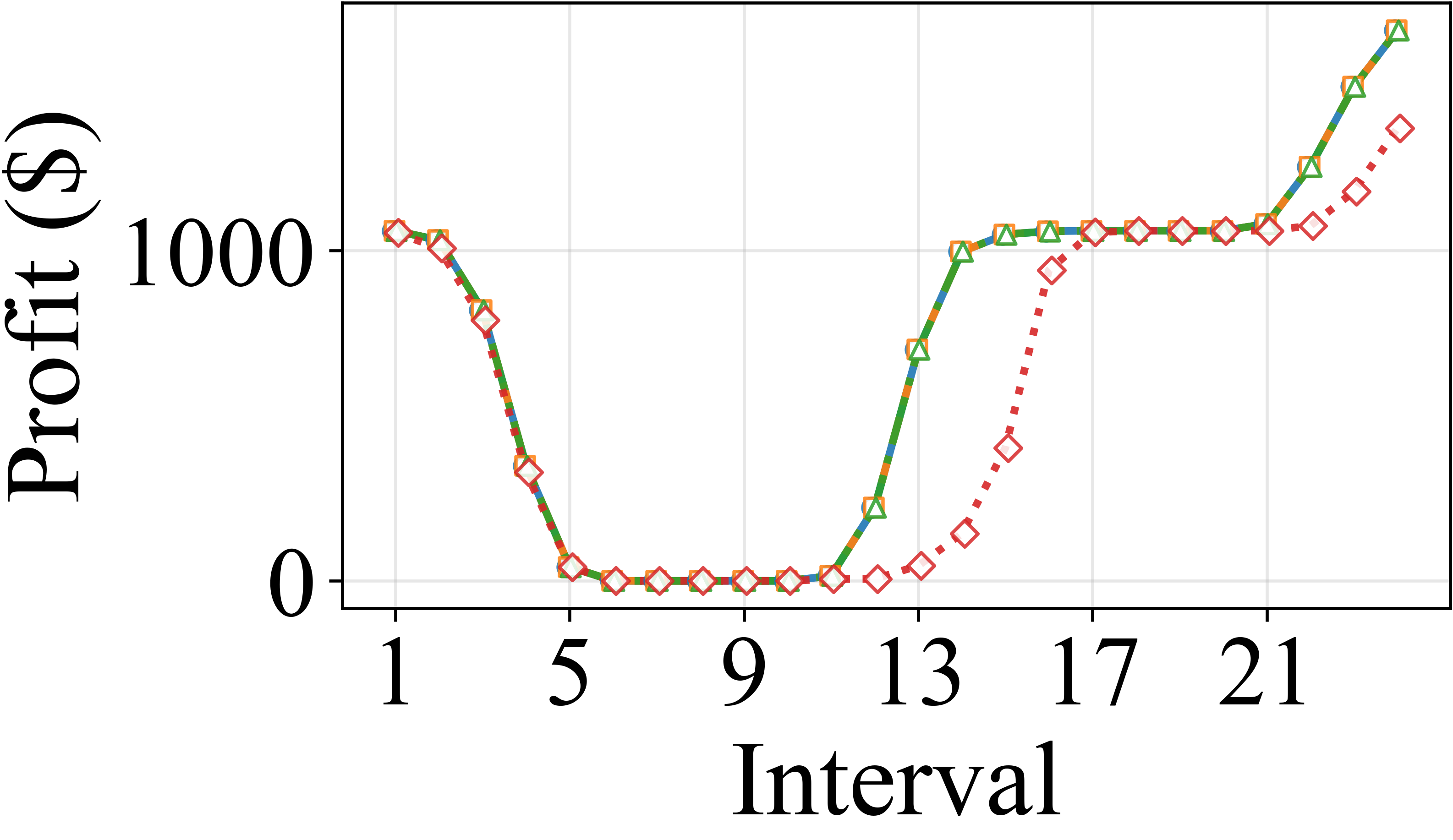}\hfill
  \includegraphics[width=0.48\columnwidth]{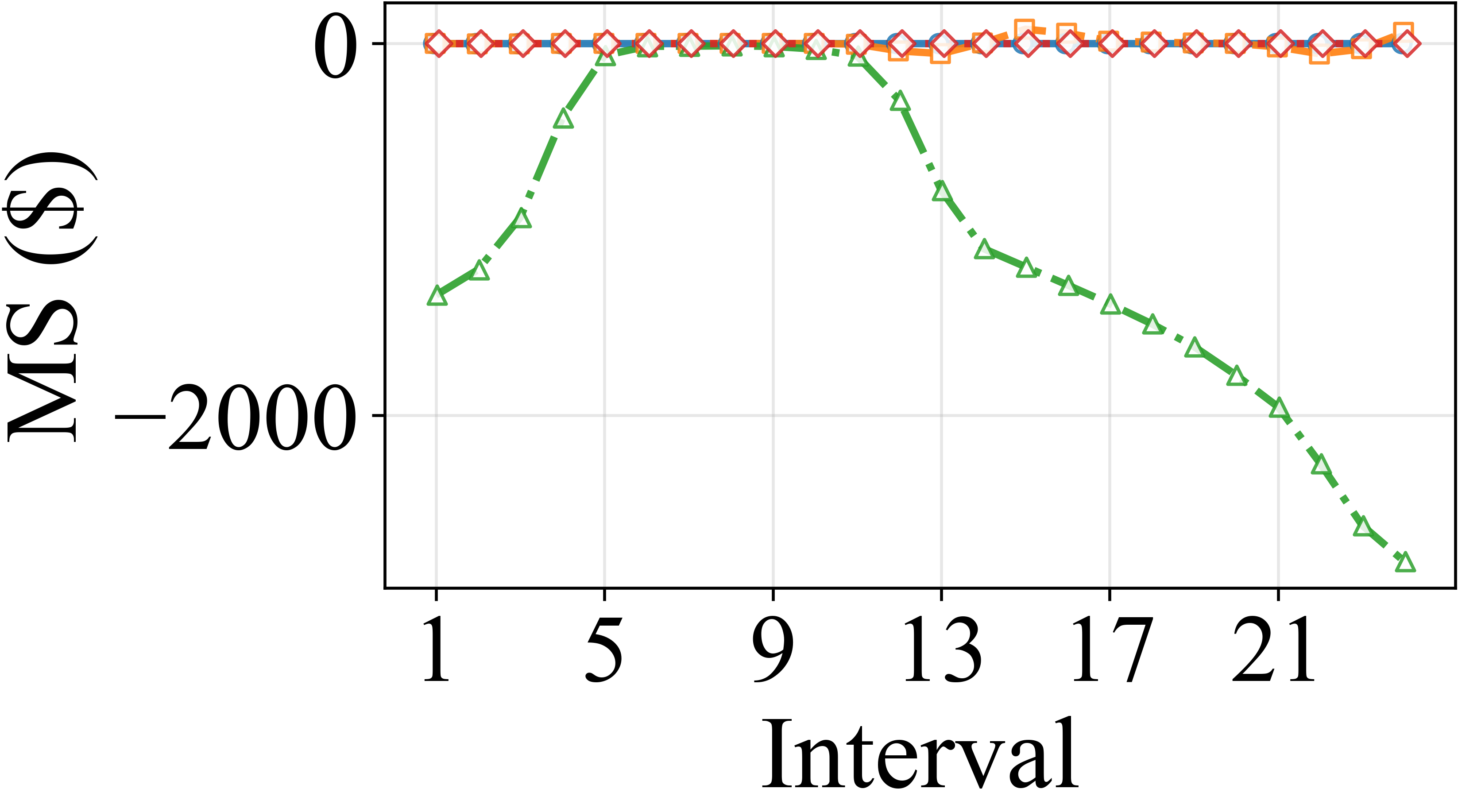}

  \vspace{0.15em}
  \includegraphics[width=0.72\columnwidth]{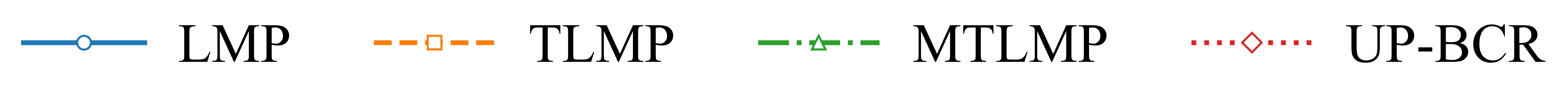}

  \caption{Top left: Mean total BCR per interval. {Top right: Distribution of total BCR across 200 scenarios. Bottom left: Mean generator profit per interval. Bottom right: Mean merchandising surplus per interval.}}
  \label{fig:sim}
  \vspace{-1.5em}
\end{figure}

% \begin{figure}[t]
%   \centering
%   \includegraphics[width=0.9\columnwidth]{figure/TotalBCR_mean.png}
%   \caption{Mean total BCR per interval.}
%   \label{fig:bcr_mean}
% \end{figure}

% \begin{figure}[t]
%   \centering
%   \includegraphics[width=0.9\columnwidth]{figure/Total_BCR_ESR_plus_Gen_total_boxplot.png}
%   \caption{Distribution of total BCR across 200 scenarios.}
%   \label{fig:bcr_box}
% \end{figure}

\begin{table}[t]
\centering
\footnotesize
\caption{Relationship between $T_{i,t'}$ and BCR for ESRs}
\label{tab:T_BCR}
\begin{tabular}{c|cc|cc}
\hline
 & \multicolumn{2}{c|}{ESR1} & \multicolumn{2}{c}{ESR2} \\
 & BCR$>0$ & BCR$=0$ & BCR$>0$ & BCR$=0$ \\
\hline
$T_{i,t'}=1$ & 100\% & 31\% & 100\% & 27\% \\
$T_{i,t'}=0$ & 0\%   & 69\% & 0\%   & 73\% \\
\hline
\end{tabular}

\vspace{0.5ex}
{\footnotesize Note: BCR is computed under LMP; UP-BCR yields the same ESR BCR categorization in these simulations. For each ESR, Table~\ref{tab:T_BCR} reports the percentage of intervals with $T_{i,t'}=1$ or $T_{i,t'}=0$.}
\vspace{-1.5em}
\end{table}

 Fig.~\ref{fig:sim} (top left) shows the mean total BCR per interval across 200 scenarios. Under LMP, positive BCR is concentrated in two groups of intervals, around 12--15 
and 21--24. In contrast, UP-BCR substantially reduces the BCR in 
these intervals, although it does not eliminate it completely. TLMP and MTLMP 
achieve zero BCR throughout by construction.

To explain why positive BCR occurs in these intervals, we relate Fig.~\ref{fig:sim} (top left) 
to the window-level indicator $T_{i,t'}$ in Section~\ref{subsec:binding_indicator} and 
its empirical relationship with BCR summarized in Table~\ref{tab:T_BCR}. Since $T_{i,t'}=1$ indicates an active SOC-boundary event within the rolling window, the relevant binding intertemporal constraints are the SOC constraints in \eqref{eq:RED_soc_dyn}--\eqref{eq:RED_soc_bounds}. 
Table~\ref{tab:T_BCR} shows that all observed cases with positive BCR occur when 
$T_{i,t'}=1$, whereas no positive-BCR case is observed when $T_{i,t'}=0$ for either ESR. 
Hence, in our simulations, positive BCR is observed only when these SOC constraints 
become binding.

Table~\ref{tab:T_BCR} also shows that $T_{i,t'}=1$ is not sufficient
for positive BCR, since BCR remains zero in a non-negligible fraction of intervals with
$T_{i,t'}=1$. This indicates that the binding of these SOC constraints alone does not
guarantee positive BCR.

Fig.~\ref{fig:sim} (top right) further shows the distribution of total BCR across the 200 
scenarios. LMP exhibits the highest median and the largest spread, whereas  UP-BCR significantly reduces total BCR and tightens the distribution. Following Corollary~\ref{cor:impossibility}, UP-BCR does not achieve zero BCR. TLMP and MTLMP are two nonuniform prices for ESRs that achieve zero BCR.
% ------------------------------------------------------------
\subsection{Price Trajectories}
% ------------------------------------------------------------

Fig.~\ref{fig:demand} (bottom) shows the mean price trajectories. LMP and the MTLMP discharge price $\bar\pi^{\mathrm{D}}_t$ nearly coincide, while the TLMP prices and the MTLMP charge price $\bar\pi^{\mathrm{C}}_t$ remain relatively low and stable. UP-BCR deviates more noticeably from LMP during the
two transition periods in which BCR is most pronounced.

% \begin{figure}[t]
%   \centering
%   \includegraphics[width=0.9\columnwidth]{figure/price_mean.png}
%   \caption{Mean price trajectories per interval.}
%   \label{fig:price}
% \end{figure}

% ------------------------------------------------------------
%\subsection{Generator profit, demand payment, and MS}
\subsection{Generator Revenue, Profit, Demand Payment, and MS}
% ------------------------------------------------------------

%\paragraph{Generator profit}

%Fig.~\ref{fig:sim} (bottom left) shows the mean generator profit across 200 scenarios. Generator revenue and profit exclude out-of-market BCR uplift payments. Relative to LMP, UP-BCR reduces generator profit during high-demand hours, whereas TLMP and MTLMP yield nearly the same generator profits as LMP. Generator revenue follows a similar pattern, while demand payments are lower under UP-BCR; additional results are provided in the appendix~\cite{YuLiuChen2026BCR}.
Fig.~\ref{fig:sim} (bottom left) shows the mean generator profit per interval across 200 scenarios. Generator revenue and profit include only in-market settlements and exclude out-of-market BCR uplift payments. Relative to LMP, UP-BCR reduces generator profit during high-demand hours, whereas TLMP and MTLMP yield nearly the same generator profits as LMP. Generator revenue follows a similar pattern, and demand payments are also lower under UP-BCR. Additional results are provided in the appendix.

%Relative to LMP, UP-BCR reduces generator profit during the high-demand hours. TLMP and MTLMP are nearly identical to LMP throughout. 
%The corresponding generator payment follows a similar pattern, and
%demand payment under the proposed uniform pricing rule is also lower
%than under LMP during the same periods. These additional simulation results are in the  appendix \cite{YuLiuChen2026BCR}.
 
% \begin{figure}[t]
%   \centering
%   \includegraphics[width=0.9\columnwidth]{figure/GenProfit_mean.png}
%   \caption{Mean generator profit per interval.}
%   \label{fig:gen_profit}
% \end{figure}
 
%\paragraph{Merchandising surplus}

Fig.~\ref{fig:sim} (bottom right) shows the merchandising
surplus (MS). LMP and UP-BCR maintain zero MS in the copper-plate model,
whereas MTLMP yields substantially negative MS during peak intervals.

% \begin{figure}[t]
%   \centering
%   \includegraphics[width=0.9\columnwidth]{figure/MS_mean.png}
%   \caption{Mean merchandising surplus per interval.}
%   \label{fig:ms}
% \end{figure}
\section{Conclusion}

This paper analyzes the fundamental tension between intertemporal operational efficiency and pricing transparency in rolling-window real-time electricity markets with ESRs. While rolling-window dispatch improves system efficiency by coordinating ESR decisions across time, it inherently creates intertemporal opportunity costs that cannot, in general, be fully internalized through a single uniform price. Our analysis shows that the existence of a zero-BCR uniform price depends on a restrictive alignment of marginal costs across charging and discharging decisions—conditions that are frequently violated in practice. As a result, out-of-market bid cost recovery (BCR) uplift payments are not merely an implementation artifact but a structural consequence of intertemporal coupling. To minimize negative impact brought by BCR, we analyze uniform pricing with minimized BCR (UP-BCR), which uses a transparent uniform pricing signal to reduce BCR uplifts in the real-time market.

%Uniform pricing with minimized BCR (UP-BCR) provides a practical compromise by minimizing BCR uplifts while preserving the simplicity and transparency of real-time market. More broadly, our findings suggest that the design of real-time markets must explicitly account for the growing role of intertemporal resources such as ESRs. In particular, we propose the intertemporal coupling indicator offering a useful lens for diagnosing when pricing inefficiencies and BCR uplifts are likely to arise.
We also propose an intertemporal
coupling indicator that provides a useful lens for diagnosing when pricing
inefficiencies and BCR uplifts are likely to arise. This study is limited to a copper-plate system, with numerical experiments based on a small number of resources and unity storage efficiencies. A complete theoretical characterization of the empirically observed relationship between the intertemporal coupling indicator and BCR, together with extensions to network-constrained systems and more realistic storage settings, is left for future work.

%These insights open future research for improved market designs aligning incentives under uncertainty and mitigating strategic behavior when costs are largely opportunity-based and privately held. As ESR penetration continues to increase, developing pricing frameworks that balance efficiency, transparency, and incentive compatibility will remain a central challenge for modern electricity markets.

%\begin{table}
%\caption{An Example of a Table}
%\label{table_example}
%\begin{center}
%\begin{tabular}{|c||c|}
%\hline
%One & Two\\
%\hline
%Three & Four\\
%\hline
%\end{tabular}
%\end{center}
%\end{table}

%%%%%%%%%%%%%%%%%%%%%%%%%%%%%%%%%%%%%%%%%%%%%%%%%%%%%%%%%%%%%%%%%%%%%%%%%%%%%%%%

\newpage
\appendix

\subsection{Linear Programming Reformulation of UP-BCR}
We show that minimizing the total BCR uplift defined in~\eqref{eq:bcr_def} is equivalent to the
linear program \eqref{eq:UP1}. Under uniform pricing
$\pi^\mathrm D_{i,t}=\pi^\mathrm C_{i,t}=\pi_t\in\mathbb{R}$, the total BCR uplift is
\begin{equation}
\begin{split}
F(\{\pi_t\})
:= \sum_{t=1}^{T}\Bigg(
&\sum_{i=1}^{N}\max\{0,-\Pi^{\mathrm{ESR}}_{i,t}(\pi_t)\}\\
&+\sum_{j=1}^{M}\max\{0,-\Pi^{\mathrm{G}}_{j,t}(\pi_t)\}
\Bigg),
\end{split}
\label{eq:up_bcr_nonsmooth}
\end{equation}
where $\Pi^{\mathrm{ESR}}_{i,t}(\pi_t)$ and $\Pi^{\mathrm{G}}_{j,t}(\pi_t)$
are affine in $\pi_t$. Since these profit functions are affine, the
constraints in~\eqref{eq:UP1} are linear, and hence \eqref{eq:UP1}
is a linear program.

\begin{lemma}\label{lem:lp_equiv}
Minimizing the total BCR uplift $F(\{\pi_t\})$ in
\eqref{eq:up_bcr_nonsmooth} over $\{\pi_t\}$ is equivalent to the
linear program \eqref{eq:UP1}: the two problems have the same optimal
objective value and the same set of optimal uniform prices
$\{\pi_t\}$.
\end{lemma}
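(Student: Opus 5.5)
The plan is the standard epigraph argument, applied term by term. For a fixed price vector $\{\pi_t\}$, define the \emph{projected} value of \eqref{eq:UP1} as
\[
V(\{\pi_t\}) := \min \sum_{t}\Big(\sum_i \mathcal{B}^{\mathrm{ESR}}_{i,t}+\sum_j \mathcal{B}^{\mathrm{G}}_{j,t}\Big),
\]
where the minimum is over the auxiliary variables subject to \eqref{eq:UP1_esr}--\eqref{eq:UP1_nn}. The whole proof reduces to showing that $V(\{\pi_t\}) = F(\{\pi_t\})$ for every $\{\pi_t\}\in\mathbb{R}^T$, and that this minimum is attained.

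\textbf{Step 1 (separability).} Once $\{\pi_t\}$ is fixed, the constraints in \eqref{eq:UP1} decouple across the index pairs $(i,t)$ and $(j,t)$. Each auxiliary variable appears in exactly two constraints and enters the objective with coefficient $+1$. Hence the inner minimization splits into independent scalar problems, for instance
\[
\min\{\mathcal{B}:\ \mathcal{B}\ge -\Pi^{\mathrm{ESR}}_{i,t}(\pi_t),\ \mathcal{B}\ge 0\}.
\]

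\textbf{Step 2 (scalar value).} Each such scalar problem is solved by
\[
\mathcal{B}=\max\{0,-\Pi^{\mathrm{ESR}}_{i,t}(\pi_t)\}.
\]
This value is feasible, and any feasible $\mathcal{B}$ is at least both lower bounds, so it is at least their maximum. The generator terms are handled identically. Summing over all $(i,t)$ and $(j,t)$ gives $V(\{\pi_t\})=F(\{\pi_t\})$, with the minimizer given explicitly by the $\max\{0,\cdot\}$ expressions.

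\textbf{Step 3 (equal optimal values).} We have
\[
\inf_{\{\pi_t\},\mathcal{B}}(\text{objective of } \eqref{eq:UP1}) = \inf_{\{\pi_t\}} V(\{\pi_t\}) = \inf_{\{\pi_t\}} F(\{\pi_t\}).
\]
This shows the two problems share the same optimal value, including the possibility that neither attains it. Attainment does hold: $F$ is a finite sum of maxima of affine functions, hence convex and piecewise linear, and bounded below by $0$. A piecewise linear function bounded below attains its infimum at a vertex, or equivalently, the LP \eqref{eq:UP1} is feasible and bounded below by $0$, so it has an optimal solution.

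\textbf{Step 4 (same optimal price sets).} Both inclusions follow from Steps 2 and 3.
\begin{itemize}
\item If $(\{\pi_t^*\},\mathcal{B}^*)$ is LP-optimal, then $F(\{\pi_t^*\}) = V(\{\pi_t^*\}) \le$ LP value $=\min F$, so $\{\pi_t^*\}$ minimizes $F$.
\item Conversely, if $\{\pi_t^*\}$ minimizes $F$, pairing it with $\mathcal{B}=\max\{0,-\Pi(\pi_t^*)\}$ gives a feasible LP point whose objective equals $F(\{\pi_t^*\})=\min F$. That point is therefore LP-optimal.
\end{itemize}

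The only step needing care is Step 4, specifically the first inclusion. An LP-optimal point need not have $\mathcal{B}$ equal to the max expression. Because all objective coefficients are positive, though, any slack would strictly increase the objective, so optimal $\mathcal{B}$ coincide with the max expressions in any case. The inequality chain above does not rely on this, so the argument does not hinge on it.
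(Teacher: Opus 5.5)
Your proposal is correct and is essentially the paper's argument: for fixed $\{\pi_t\}$ the inner minimization over the BCR variables is attained at $\max\{0,-\Pi^{\mathrm{ESR}}_{i,t}(\pi_t)\}$ and $\max\{0,-\Pi^{\mathrm{G}}_{j,t}(\pi_t)\}$, so its value is $F(\{\pi_t\})$, and minimizing over prices then gives the result. Your Steps 3--4 only make explicit the attainment and the two inclusions of optimal price sets that the paper compresses into ``minimizing both sides over $\{\pi_t\}$,'' and although ``attains its infimum at a vertex'' can fail literally when $F$ is constant in some direction (e.g., an interval with no dispatched units), your alternative justification via feasibility and boundedness of the LP is sound.
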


\begin{proof}
Fix any $\{\pi_t\}$. By~\eqref{eq:UP1_esr}--\eqref{eq:UP1_nn},
\[
\begin{aligned}
\mathcal{B}^{\mathrm{ESR}}_{i,t}
&\ge \max\{0,-\Pi^{\mathrm{ESR}}_{i,t}(\pi_t)\},\\
\mathcal{B}^{\mathrm{G}}_{j,t}
&\ge \max\{0,-\Pi^{\mathrm{G}}_{j,t}(\pi_t)\}.
\end{aligned}
\]
Since \eqref{eq:UP1_obj} minimizes the sum of the BCR variables
$\mathcal{B}^{\mathrm{ESR}}_{i,t}$ and
$\mathcal{B}^{\mathrm{G}}_{j,t}$, each variable attains its smallest
feasible value at optimality:
\[
\begin{aligned}
\mathcal{B}^{\mathrm{ESR}*}_{i,t}
&= \max\{0,-\Pi^{\mathrm{ESR}}_{i,t}(\pi_t)\},\\
\mathcal{B}^{\mathrm{G}*}_{j,t}
&= \max\{0,-\Pi^{\mathrm{G}}_{j,t}(\pi_t)\}.
\end{aligned}
\]
Hence, for every fixed $\{\pi_t\}$,
\[
\min_{\substack{
\{\mathcal{B}^{\mathrm{ESR}}_{i,t}\},
\{\mathcal{B}^{\mathrm{G}}_{j,t}\}\\
\text{s.t.}~\eqref{eq:UP1_esr}\text{--}\eqref{eq:UP1_nn}}}
\sum_{t=1}^{T}
\left(
\sum_{i=1}^{N}\mathcal{B}^{\mathrm{ESR}}_{i,t}
+\sum_{j=1}^{M}\mathcal{B}^{\mathrm{G}}_{j,t}
\right)
=F(\{\pi_t\}).
\]
Minimizing both sides over $\{\pi_t\}$ shows that the UP-BCR problem
\eqref{eq:UP1} and the minimization of $F(\{\pi_t\})$ have the same
optimal objective value and the same set of optimal uniform prices.
\end{proof}

\subsection{Proof of Proposition~\ref{prop:zero_bcr}}

For notational convenience, let
$\mathcal B^{\mathrm D}_{i,t}:=\mathcal B^{\mathrm{ESR}}_{i,t}$
for $i\in\mathcal S_t^{\mathrm D,+}$ and
$\mathcal B^{\mathrm C}_{k,t}:=\mathcal B^{\mathrm{ESR}}_{k,t}$
for $k\in\mathcal S_t^{\mathrm C,+}$.

We prove both directions.

(\emph{Only if}) Suppose that a uniform price $\pi_t$ achieves
zero BCR for all dispatched participants, i.e., $\mathcal{B}_{j,t}^{\mathrm{G}} = 0$, $\forall j\in\mathcal{G}_t^\mathrm{+}$,
$\mathcal{B}_{i,t}^{\mathrm{D}} = 0$, $\forall i\in\mathcal{S}_t^\mathrm{D,+}$,
and $\mathcal{B}_{k,t}^{\mathrm{C}} = 0$, $\forall k\in\mathcal{S}_t^\mathrm{C,+}$. Here, $\mathcal{G}_t^\mathrm{+}$, $\mathcal{S}_t^\mathrm{D,+}$, and $\mathcal{S}_t^\mathrm{C,+}$ are the set of dispatched units defined in \eqref{eq:active_sets}. 

%(\emph{Only if}) Suppose that a uniform price $\pi_t$ achieves
%zero BCR for all active participants, where
%$\mathcal{G}_t^\mathrm{+}$, $\mathcal{S}_t^\mathrm{D,+}$, and
%$\mathcal{S}_t^\mathrm{C,+}$ are the active sets defined in
%\eqref{eq:active_sets}.

By the definition of BCR in \eqref{eq:bcr_def}, all dispatched participants must earn
nonnegative dispatch-following profit. Hence, we achieve the left-hand side inequalities below.
\begin{align}
  0 &\ge (c_j^{\mathrm{G}}-\pi_t)g_{j,t}^{\mathrm{RED}}, 
  &&\Rightarrow \pi_t \ge c_j^{\mathrm{G}}, 
  \quad \forall j\in \mathcal{G}_t^{+}, 
  \label{eq:pf_gen_app}\\
  0 &\ge (c_i^{\mathrm{D}}-\pi_t)g_{i,t}^{\mathrm{RED\text{-}D}}, 
  &&\Rightarrow \pi_t \ge c_i^{\mathrm{D}},
  \quad \forall i\in \mathcal{S}_t^\mathrm {D,+}, 
  \label{eq:pf_dis_app}\\
  0 &\ge (\pi_t-c_k^{\mathrm{C}})g_{k,t}^{\mathrm{RED\text{-}C}}, 
  &&\Rightarrow \pi_t \le c_k^{\mathrm{C}},
  \quad \forall k\in \mathcal{S}_t^\mathrm {C,+}. 
  \label{eq:pf_chg_app}
\end{align}
The right-hand side inequalities above are achieved because all realized dispatch quantities are strictly positive by definition of the set of dispatched units in \eqref{eq:active_sets}. Dividing dispatch quantities on both sides of the inequalities gives the right-hand side.
% through yields
% \begin{align}
%   \pi_t &\ge c_j^{\mathrm{G}}, && \forall j\in\mathcal{G}_t^\mathrm{+}, \label{eq:lb_gen_app}\\
%   \pi_t &\ge c_i^{\mathrm{D}}, && \forall i\in\mathcal{S}_t^\mathrm{D,+}, \label{eq:lb_dis_app}\\
%   \pi_t &\le c_k^{\mathrm{C}}, && \forall k\in\mathcal{S}_t^\mathrm{C,+}. \label{eq:ub_chg_app}
% \end{align}\

Therefore, the uniform price $\pi_t$ has
\[
\pi_t \ge \max\!\left\{
\max_{j\in\mathcal{G}_t^\mathrm{+}} c_j^{\mathrm{G}},\;
\max_{i\in\mathcal{S}_t^\mathrm{D,+}} c_i^{\mathrm{D}}
\right\},
\qquad
\pi_t \le \min_{k\in\mathcal{S}_t^\mathrm{C,+}} c_k^{\mathrm{C}}.
\]
Such a uniform price $\pi_t$ with zero BCR for all units can exist only if
\[
\max\!\left\{
\max_{j\in\mathcal{G}_t^\mathrm{+}} c_j^{\mathrm{G}},\;
\max_{i\in\mathcal{S}_t^\mathrm{D,+}} c_i^{\mathrm{D}}
\right\}
\le
\min_{k\in\mathcal{S}_t^\mathrm{C,+}} c_k^{\mathrm{C}}.
\]

(\emph{If}) Conversely, suppose that condition~\eqref{eq:feasibility}
holds. Then any price $\pi_t$ chosen in the interval
\[
\max\!\left\{
\max_{j\in\mathcal{G}_t^\mathrm{+}} c_j^{\mathrm{G}},\;
\max_{i\in\mathcal{S}_t^\mathrm{D,+}} c_i^{\mathrm{D}}
\right\}
\le \pi_t \le
\min_{k\in\mathcal{S}_t^\mathrm{C,+}} c_k^{\mathrm{C}}
\]
satisfies the right-hand side for \eqref{eq:pf_gen_app}--\eqref{eq:pf_chg_app}. Therefore,
every dispatched generator and discharging ESR receives a price no
smaller than its bid cost, while every dispatched charging ESR pays
a price no larger than its bid cost. Hence all dispatched participants
have nonnegative dispatch-following profit, so zero BCR is achieved by the uniform price $\pi_t$.
\hfill$\square$

\subsection{Proof of Corollary~\ref{cor:impossibility}}

%For notational convenience, let
%$\mathcal B^{\mathrm D}_{i,t}:=\mathcal B^{\mathrm{ESR}}_{i,t}$
%for $i\in\mathcal S_t^{\mathrm D,+}$ and
%$\mathcal B^{\mathrm C}_{k,t}:=\mathcal B^{\mathrm{ESR}}_{k,t}$
%for $k\in\mathcal S_t^{\mathrm C,+}$.

Suppose condition~\eqref{eq:feasibility} fails in the 
nontrivial case described in Remark~\ref{rem:nontrivial_case}, for which 
Section~\ref{sec:toy} provides a concrete example---a set of dispatched units configuration with one charging ESR and one discharging ESR for which condition~\eqref{eq:feasibility} fails. 
By Proposition~\ref{prop:zero_bcr}, there does not exist any uniform price $\pi_t$ 
that achieves zero BCR for all dispatched participants at interval~$t$.

Therefore, the point
\[
\sum_{j\in\mathcal{G}_t^\mathrm{+}} \mathcal{B}_{j,t}^{\mathrm{G}}
+\sum_{i\in\mathcal{S}_t^\mathrm{D,+}} \mathcal{B}_{i,t}^{\mathrm{D}}
+\sum_{k\in\mathcal{S}_t^\mathrm{C,+}} \mathcal{B}_{k,t}^{\mathrm{C}}
=0
\]
is infeasible for the UP-BCR problem.

Since all BCR variables are constrained to be nonnegative, the objective value of the UP-BCR problem is also nonnegative. If the optimal objective value were equal to zero, then every BCR term would have to be zero, which would imply the existence of a uniform price achieving zero BCR for all dispatched participants, contradicting Proposition~\ref{prop:zero_bcr}. Hence, the optimal objective value must satisfy
\[
\min_{\pi_t}
\left(
\sum_{j\in\mathcal{G}_t^\mathrm{+}} \mathcal{B}_{j,t}^{\mathrm{G}}
+\sum_{i\in\mathcal{S}_t^\mathrm{D,+}} \mathcal{B}_{i,t}^{\mathrm{D}}
+\sum_{k\in\mathcal{S}_t^\mathrm{C,+}} \mathcal{B}_{k,t}^{\mathrm{C}}
\right)
>0.
\]
\hfill$\square$

\subsection{Additional Simulation Results}

Fig.~\ref{fig:gen_pay} shows the mean generator revenue per
interval across 200 scenarios. Its pattern is similar to that of
generator profit in the main text.
\begin{figure}[t]
  \centering
  \includegraphics[width=0.8\columnwidth]{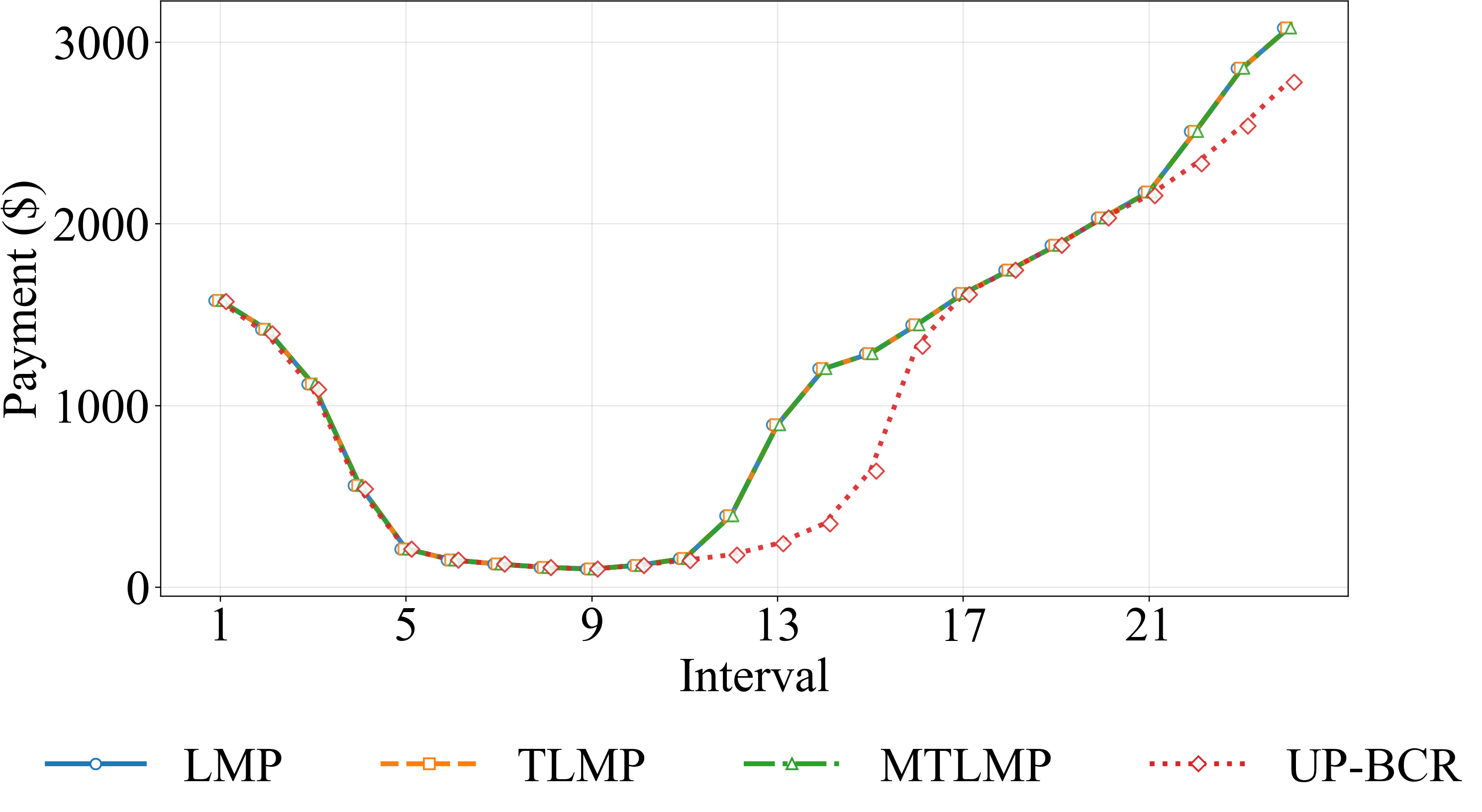}
  \caption{Mean generator revenue per interval.}
  \label{fig:gen_pay}
\end{figure}

Fig.~\ref{fig:demand_pay} shows the mean demand payment.
\begin{figure}[t]
  \centering
  \includegraphics[width=0.8\columnwidth]{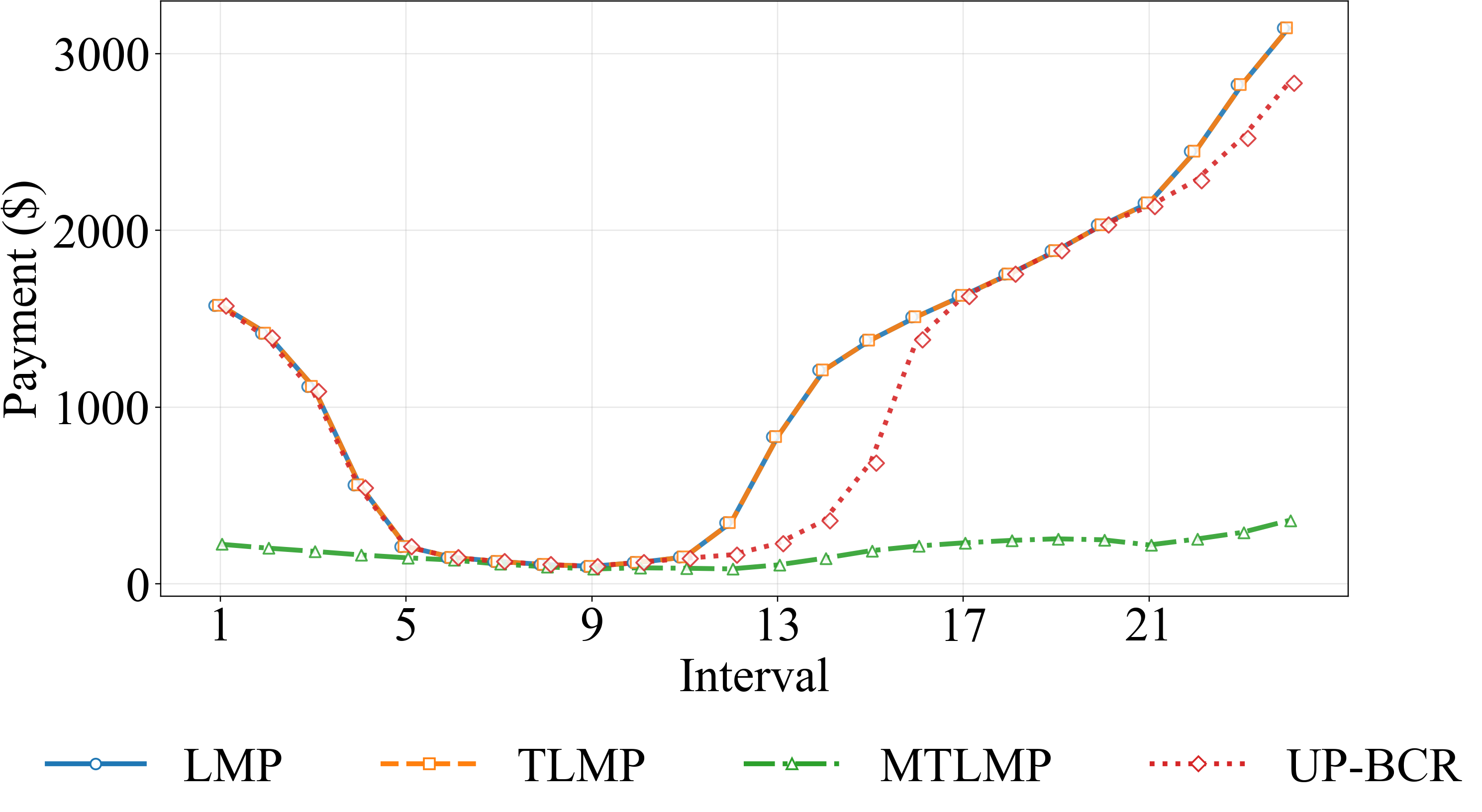}
  \caption{Mean demand payment per interval.}
  \label{fig:demand_pay}
\end{figure}
\section*{Acknowledgment}
The authors gratefully acknowledge Sergio Dueñas Meléndez, Storage Sector Manager at CAISO, for valuable discussions on practical energy policy considerations and risks in new market mechanism designs for integrating energy storage into electricity markets.


\begin{thebibliography}{99}
\bibitem{EIA2025Battery}
U.S. Energy Information Administration, ``U.S. battery capacity increased 66\% in 2024,'' {\it Today in Energy}, Mar. 12, 2025. [Online]. Available: \url{https://www.eia.gov/todayinenergy/detail.php?id=64705}

\bibitem{CAISO2025BatteryReport}
California Independent System Operator (CAISO), ``2024 special report on battery storage,'' May 29, 2025. [Online]. Available: \url{https://www.caiso.com/documents/2024-special-report-on-battery-storage-may-29-2025.pdf}

\bibitem{Hogan:20}
W.W. Hogan, ``Electricity Market Design: Multi-Interval Pricing Models,'' Online paper, Harvard Electricity Policy Group, Jun. 2020. [Online]. Available: \url{https://scholar.harvard.edu/files/whogan/files/hogan_hepg_multi_period_062220.pdf}

\bibitem{Zhao2019multi}
J. Zhao, T. Zheng, and E. Litvinov, ``A Multi-Period Market Design for Markets With Intertemporal Constraints,'' {\it IEEE Trans. Power Syst.}, vol. 35, no. 4, pp. 3015--3025, Jul. 2020.





\bibitem{CAISO2025StorageDesign}
California Independent System Operator (CAISO), ``Storage design and modeling: Working group on uplift \& default energy bids (DEB), state-of-charge management, and mixed-fuel \& distribution-level resources,'' 2025. [Online]. Available: \url{https://stakeholdercenter.caiso.com/StakeholderInitiatives/Storage-design-modeling}

\bibitem{CAISO2024StorageBCR}
California Independent System Operator (CAISO), ``Storage Bid Cost Recovery and Default Energy Bids Enhancements,'' Nov. 2024. [Online]. Available: \url{https://stakeholdercenter.caiso.com/StakeholderInitiatives/storage-bid-cost-recovery-and-default-energy-bids-enhancements}

\bibitem{Guo&Chen&Tong:21TPS}
Y. Guo, C. Chen, and L. Tong, ``Pricing Multi-Interval Dispatch Under Uncertainty Part I: Dispatch-Following Incentives,'' {\it IEEE Trans. Power Syst.}, vol. 36, no. 5, pp. 3865--3877, Sep. 2021.

% \bibitem{Chen&Guo&Tong:20TPS}
% C. Chen, Y. Guo, and L. Tong, ``Pricing Multi-Interval Dispatch Under Uncertainty Part II: Generalization and Performance,'' {\it IEEE Trans. Power Syst.}, vol. 36, no. 5, pp. 3878--3886, Sep. 2021.

\bibitem{PricingESR}
C. Chen, and L. Tong, ``Pricing Real-Time Stochastic Storage Operations,'' {\it in Electric Power Systems Research}, vol. 212, pp. 108606, 2022. 

\bibitem{Werner2023}
L. Werner, N. Christianson, A. Zocca, A. Wierman, and S. Low, ``Pricing uncertainty in stochastic multi-stage electricity markets,'' {\it in Proc. 62nd IEEE Conference on Decision and Control (CDC)}, 2023, pp. 1580--1587.

\bibitem{CMP}
B. Hua, D.A. Schiro, T. Zheng, R. Baldick, and E. Litvinov, ``Pricing in Multi-Interval Real-Time Markets,'' {\it IEEE Trans. Power Syst.}, vol. 34, no. 4, pp. 2696--2705, Jul. 2019.

\bibitem{ChoPapavasiliou2022}
J. Cho and A. Papavasiliou, ``Pricing Under Uncertainty in Multi-Interval Real-Time Markets,'' {\it Oper. Res.}, vol. 71, no. 6, Nov.--Dec. 2023.

\bibitem{Chen26ramping}
C. Chen, V. Norambuena, and L. Tong, ``Ramping procurement and bid-cost recovery in real-time market,'' arXiv:2606.19599, 2026.

%\bibitem{Chen26ramping}
%C. Chen, V. Norambuena, and L. Tong, ``Ramping procurement and bid cost recovery in real-time dispatch,'' arXiv preprint, 2026.
% \bibitem{ChenTong2025IncentivizingRamping}
% C. Chen and L. Tong, ``Incentivizing Ramping with Uniform Pricing,'' {\it in Proc. IEEE Power \& Energy Society General Meeting (PESGM)}, 2025, pp. 1--5.

% \bibitem{Wasti2025}
% S. Wasti, A. R. Kumar, S. Varghese, A. Giacomoni, and M. Webster, ``Energy storage state-of-charge management in real-time markets,'' {\it in Proc. IEEE Power \& Energy Society General Meeting (PESGM)}, 2025, pp. 1--5.


% \bibitem{ChenTong2023SOCBid}
% C. Chen and L. Tong, ``Convexifying market clearing of SoC-dependent bids from merchant storage participants,'' {\it IEEE Trans. Power Syst.}, vol. 38, no. 3, pp. 2955--2957, 2023.
%\bibitem{YuLiuChen2026BCR}
%\textcolor{cyan}{Y. Yu, J. Liu, and C. Chen, “Appendix: Minimizing Bid Cost Recovery for Energy Storage with Uniform Pricing,” arXiv:2606.19599, 2026.}



\end{thebibliography}
\end{document}